\newcommand{\ii}{\mathrm{i}}
\newcommand{\proj}[2]{| {#1} \rangle\!\langle {#2} |}
\renewcommand*\d[2][]{%
	\mathrm{d}%
	\ifx\relax#1\relax\else
	\rule{-0.02em}{1.5ex}^{#1}\rule{0.08em}{0ex}\!
	\fi
	#2\,
}
\newcommand{\phih}{\hat \phi}
\newcommand{\muh}{\hat{\mu}}
\newcommand{\rhoh}{\hat{\rho}}
\newcommand{\maf}{\mathsf}
\newcommand{\prt}{\mathcal{P}}
\newcommand{\geno}{\hat{\mathcal{O}}}
\newcommand{\hilb}{\mathcal{H}}
\renewcommand\onecolumngrid{% <<<<<<
\do@columngrid{one}{\@ne}%
\def\set@footnotewidth{\onecolumngrid}% <<<<<<<<<<<<<<<<
\def\footnoterule{\kern-6pt\hrule width 1.5in\kern6pt}%
}
\renewcommand\twocolumngrid{% <<<<<<
        \def\footnoterule{% restore rule
        \dimen@\skip\footins\divide\dimen@\thr@@
        \kern-\dimen@\hrule width.5in\kern\dimen@}
        \do@columngrid{mlt}{\tw@}
}%
\newcommand\restr[2]{{% we make the whole thing an ordinary symbol
		\left.\kern-\nulldelimiterspace % automatically resize the bar with \right
		#1 % the function
		\vphantom{\normal|} % pretend it's a little taller at normal size
		\right|_{#2} % this is the delimiter
}}
\newcommand{\diff}{\mathrm{d}}
\newcommand{\R}[1]{\mathbb{R}^{#1}}
\newcommand{\C}[1]{\mathbb{C}^{#1}}
\newcommand{\beq}{\begin{equation}}
	\newcommand{\eeq}{\end{equation}}
\newtheorem{lemma}{Lemma}
\newtheorem*{claim*}{Claim}
\begin{document}
	
\title{Non-perturbative method for particle detectors with continuous interactions}
%\title{Efficient non-perturbative method for particle detectors with continuous interactions}	
%\title{Non-perturbative approximation via multiple sudden interactions}
%\title{Multiple sudden interactions to approximate general particle detector phenomena}
%\title{Particle detectors with multiple sudden interactions}
	
\author{Jos\'{e} Polo-G\'{o}mez}
\email{jpologomez@uwaterloo.ca}
\affiliation{Department of Applied Mathematics, University of Waterloo, Waterloo, Ontario, N2L 3G1, Canada}
\affiliation{Institute for Quantum Computing, University of Waterloo, Waterloo, Ontario, N2L 3G1, Canada}
\affiliation{Perimeter Institute for Theoretical Physics, Waterloo, Ontario, N2L 2Y5, Canada}
	
\author{Eduardo Mart\'{i}n-Mart\'{i}nez}
\email{emartinmartinez@uwaterloo.ca}
\affiliation{Department of Applied Mathematics, University of Waterloo, Waterloo, Ontario, N2L 3G1, Canada}
\affiliation{Institute for Quantum Computing, University of Waterloo, Waterloo, Ontario, N2L 3G1, Canada}
\affiliation{Perimeter Institute for Theoretical Physics, Waterloo, Ontario, N2L 2Y5, Canada}

\begin{abstract}
		
We show that detector switching profiles consisting of trains of delta couplings are a useful computational tool to efficiently approximate results involving continuous switching functions, both in setups involving a single detector and multiple ones. The rapid convergence to the continuous results at all orders in perturbation theory for sufficiently regular switchings means that this tool can be used to obtain non-perturbative results for general particle detector phenomena with continuous switching functions.  

%Beyond the practical character of the technique, the result also offers insights into the effect of sudden interactions, as well as hints regarding the necessary ingredients for entanglement harvesting, and therefore of the structure of entanglement in quantum field theory.
		
\end{abstract}
	
\maketitle
	
\section{Introduction}\label{Section: introduction}
	
Performing local operations on quantum fields is a challenging endeavour, both from the technical and the foundational point of view~\cite{Hellwig1969,Hellwig1970formal,Hellwig1970operations,Sorkin1993,FewsterVerch,Borsten2021,Jubb2021,PDfromLocalizedQFTs}. Among the several approaches used to implement operations and local measurements of a quantum field, particle detectors---non-relativistic quantum systems locally coupled to quantum fields---have succeeded as models of local probes in quantum field theory (QFT), and more specifically in relativistic quantum information (RQI). Particle detector models have allowed us to make physical sense and build intuition for phenomena ranging from the concept of particle~\cite{Unruh,UnruhWald,Redhead1995,Malament1996,Colosi2008,Papageorgiou2019} to the entanglement structure of QFTs~\cite{Valentini1991,Reznik2003,Reznik2005,Silman2007,Pozas2015,Salton2015,Pozas2016,Henderson2019,Henderson2020,Erickson2021notreallyharvesting,Erickson2021fallingintoBH,Stritzelberger2021,Robbins2021,Foo2021,MendezAvalos2022,Henderson2022,Bueley2022,Peropadre2010,Janzen2023,Gooding2023,Hotta2015,Trevison2018,Trevison2019,EntanglementStructureProbes,FullyRelativisticEH}. They also have provided a basis for modelling quantum information protocols in relativistic setups (e.g.,~\cite{Cliche2009,Landulfo2016,Jonsson2017,Jonsson2018,Simidzija2020,Barcellos2021,Erickson2022rapidchannel,Erickson2022teleportation,Koji2023,Jonsson2015,Blasco2016,Blasco2015,Hotta2008,Hotta2011,Hotta2014,Funai2017}), as well as for the formulation of a measurement theory consistent with relativity~\cite{MeasurementTheory,MeasurementsEH,DanIreneML}.

Typically, particle detectors are coupled to a quantum field with some strength (determined by a parameter $\lambda$), and for a specific period of time (determined by some switching function $\chi$). Depending on the specific configuration, particle detectors can be used either as emitters and receptors of information, or as measurement devices, i.e., local probes that gather information about the field through their interaction. In both cases, in order to make predictions we need to be able to compute the final state of the detectors after their interaction with the field. However, the full theory including the interaction between the detector and the field cannot be exactly solved in general, as is the case for most interacting field theories. %Consequently, calculations involving particle detectors require the use techniques that allow us to circumvent this obstruction.

A frequent avenue to circumvent the complications mentioned above is the use of perturbation theory. In this approach, the final state of the detectors is expanded in a series of terms proportional to increasingly higher powers of the coupling parameter $\lambda$. Under the assumption that the detectors are weakly coupled to the field, the higher order contributions can be neglected, leaving a truncated series as the final result. Despite its popularity, the perturbative approach has its limitations. To start with, a proof that the perturbative series is convergent in all regimes is still missing for particle detector models. Even when we are in a regime of coupling strengths for which the perturbative series converges, the truncated result is an approximation whose error is not known a priori, and can only be trusted to be accurate for ``small enough'' coupling parameters. Furthermore, there are many physical situations where one needs to go beyond leading order in perturbation theory, which can be technically challenging. %Finally, one cannot use these techniques in the physically relevant regimes of ultra-strong coupling (for example in circuit QED~\cite{}). 
Finally, non-perturbative effects in the model cannot in general be understood using perturbative methods.

%A plethora of methods have been developed to obtain approximate results from divergent perturbative series, such as super and hyperasymptotics~\cite{Boyd1999}, Borel resummation~\cite{Costin2008}, resurgence~\cite{Cherman2015,Aniceto2019}, and self-consistent expansions~\cite{Schwartz2008}, with varying degrees of success depending on the theory (see, e.g.,~\cite{Remez2018} for the $\phi^4$ theory). Even when we are in a regime of coupling strengths for which the perturbative series converges, the truncated result is an approximation whose error is not known a priori, and can only be trusted to be accurate for ``small enough'' coupling parameters. As a result, non-perturbative effects such as bound states cannot in general be understood using perturbative methods\footnote{There are excepcional cases in which a good understanding of the way in which a divergent perturbative series diverges allows to approximate non-perturbative results using perturbative theory, see, e.g.,~\cite{Mera2015}, for an example involving the Stark effect.}.

These limitations call for the use of non-perturbative techniques. One way to proceed is to work in specific scenarios where the evolution of the detectors and the field can be calculated exactly. One of those scenarios is found in the context of continuous-variable quantum mechanics (see, e.g.,~\cite{GQMRev,Serafini2017} for reviews). Specifically, when the quantum field has both infrared and ultraviolet cutoffs, it can be reduced to (or approximated by) a lattice, so that if the detectors and the field are initially in a so-called \textit{Gaussian state} (i.e., the Wigner function describing their joint state is Gaussian), and if their coupling is linear, then the evolution can be solved exactly using Gaussian methods~\cite{Brown2013,Bruschi2013}. Another scenario where calculations can be carried out non-perturbatively is when the coupling between the detectors and the field consists of sudden interactions, i.e., one or several delta couplings~\cite{PozasKerstjens2017,Jonsson2018,TalesAhmed2022,Erickson2022rapidchannel,Erickson2022teleportation,Erickson2022Fermi,Kollas2023}. In this case, as we will review in Secs.~\ref{Subsection: Setup single detector} and~\ref{Subsection: Setup multiple detectors}, the evolution can always be written as a sequence of known unitaries, and, moreover, these unitaries can often be exactly evaluated (see, e.g.,~\cite{Hotta2008,Simidzija2017nonperturbative,Sahu2022,Erickson2023Gaussian}). 

%In this paper we devise an efficient non-perturbative method that applies to particle detector phenomena  \sout{that efficiently approximates setups involving particle detectors} in regimes where  perturbation theory was the only tool to make predictions so far: the case of detectors that couple to the quantum field continuously in time.

In this paper we devise an efficient non-perturbative method that applies to particle detectors that couple to the quantum field continuously in time. Given the physical relevance of these scenarios, and the fact that so far they have been mostly analyzed perturbatively, these techniques may allow us to explore possible new non-perturbative phenomena in the study of the dynamics induced by time-extended couplings, such as, e.g., in the Unruh effect or entanglement harvesting, among many others.

The paper is organized as follows: in Secs.~\ref{Section: Single detector} and~\ref{Section: Multiple detectors} we develop, respectively, the non-perturbative method for setups involving a single and multiple particle detectors. For each case, we first review the formalism of delta coupling interactions, we then proceed to show how they can be used to approximate the dynamics of continuous couplings, and conclude stating the approximation result that guarantees the efficiency of the method, whose proof can be found in Appendices~\ref{Appendix: Proof of the approximation result} and~\ref{Appendix: The n-dimensional Riemann-Stieltjes integral}. In Sec.~\ref{Section: Examples}, we demonstrate the method by applying it to specific scenarios involving one and two detectors. Finally, our conclusions are presented in Sec.~\ref{Section: Conclusion}.

\section{Single detector case}\label{Section: Single detector}

In this section we describe the setup of a single particle detector interacting with a quantum field through a sequence of delta couplings, and we explain how it can be used to approximate the phenomenology of setups involving continuous detector-field couplings.  

\subsection{Setup}\label{Subsection: Setup single detector}

We consider a localized non-relativistic quantum system (i.e., a particle detector~\cite{Unruh,DeWitt,UnruhWald,Tales2022PD}), moving in a general (1+$d$)-dimensional globally hyperbolic spacetime $(\mathcal{M},\maf g)$, whose centre of mass follows a trajectory $\maf x (\tau)$ parametrized by its proper time $\tau$. The detector is modelled as a quantum system with free Hamiltonian $\hat H_{\text{free}}$. We also consider that we are within the regime in which the detector can be assumed to be Fermi-Walker rigid to a good degree of approximation~\cite{Poisson2004}, and that, therefore, in the reference frame defined by a set of Fermi normal coordinates for the detector's centre of mass $\maf z = (\tau,\bm z)$, its shape is kept constant. This assumption is commonly made in particle detector models~\cite{Schlicht2004,Louko2006,Langlois,TalesBruno2020,Tales2022PD}, and it is regarded as a physically realistic approximation in many experimental setups~\cite{Kazantsev1974,Edu2018}.

The detector is weakly coupled to a quantum field $\phih$, and its coupling can be described in the interaction picture by the Hamiltonian weight\footnote{Notice that the Hamiltonian weight is a scalar, which upon multiplication by the geometric factor $\sqrt{-g(\maf z)}$ yields the Hamiltonian \textit{density}, where $g(\maf z)$ is the determinant of the metric at $\maf z$.}~\cite{TalesBruno2020}
\begin{equation}\label{Eq: interaction Hamiltonian}
\hat h_\text{int}(\tau, \bm z) = \lambda \chi(\tau) \big[ F(\bm z) \muh^\dagger_{\alpha}(\tau) \,\geno^{\alpha}(\tau,\bm z) + \text{H.c.}\big].
\end{equation}
Here, $\lambda$ is the coupling strength of the interaction, and $\chi$ and $F$ are the switching and smearing functions that modulate the coupling in time and space, where in particular $F$ depends on the shape of the detector (and, in general, can be complex~\cite{QOptics,Edu2013,Pozas2016,Lopp2020,Tales2022PD}). $\muh^{\alpha}$ and $\geno^{\alpha}$ are arbitrary tensor operators of the detector and the field, respectively, and $\alpha$ is a general multi-index (for example made of several spacetime indices). This is the most general interaction Hamiltonian weight that we can write for a Fermi-Walker rigid particle detector coupled to a field. In particular, for a monopolar coupling with a real scalar field one recovers the Unruh-DeWitt model, but this interaction also covers multipolar couplings with real/complex, scalar/vector/tensor, bosonic/fermionic fields. 

In the case of multiple sudden interactions, the detector couples to the field via a train of delta couplings at a sequence of times \mbox{$\{\tau_j, \, j=1,\hdots,N\}$, $N\geq1$}, and the switching function is thus given by
\begin{equation}\label{Eq: train of deltas}
\chi(\tau) = \sum_{j=1}^N \eta_j \,\delta(\tau-\tau_j),
\end{equation}
where $\eta_j$ is the strength with which the detector couples to the field at time $\tau_j$. With the interaction Hamiltonian weight given by Eq.~\eqref{Eq: interaction Hamiltonian}, the evolution operator in the interaction picture is given by
\begin{align}\label{Eq: evolution operator}
\hat U \!& =\! \mathcal{T}_\tau\!\exp\!\bigg( \!\!-\!\ii\!\int\!\diff\maf z\, \hat h_\text{int}(\maf z)   \bigg) \\
& = \!\mathcal{T}_\tau\! \exp\!\bigg\{\!\! -\!\ii\lambda\!\sum_{j=1}^{N}\! \eta_j\! \bigg[ \muh_{\alpha,j}^\dagger\!\! \int\!\!\diff\bm z \sqrt{-g_j} F(\bm z)\geno_j^\alpha(\bm z)  
 \!+\! \text{H.c.}\bigg]\!\bigg\} \nonumber\\
& = \!\mathcal{T}_\tau\!\! \prod_{j=1}^{N}\! \exp\!\bigg\{\!\! -\!\ii\lambda\eta_j\!\bigg[\muh_{\alpha,j}^\dagger \!\!\int\!\!\diff \bm z \sqrt{-g_j} F(\bm z)\geno^\alpha_j(\bm z)\!+\!\text{H.c.}\bigg]  \!\bigg\}, \nonumber
\end{align}
where \mbox{$g_{j} \equiv \det g_{\mu\nu}(\tau_j,\bm z) $} is the determinant of the spacetime metric at $(\tau_j,\bm z)$, and similarly $\muh_{\alpha,j}^\dagger \equiv \muh_\alpha^\dagger(\tau_j)$, and \mbox{$\geno^\alpha_j(\bm z)\equiv\geno^\alpha(\tau_j,\bm z)$}. $\mathcal{T}_\tau\exp$ and $\mathcal{T}_\tau$ respectively denote the time-ordered exponential and the time ordering operation with respect to the detector's proper time $\tau$, which is an acceptable choice of time-ordering parameter as long as we are in the regime of validity of the detector approximation, where the detector model is effectively covariant~\cite{TalesBruno2021}. In particular, writing 
\begin{equation}\label{Eq: each unitary single detector}
\!\!\hat U_j\! \coloneqq  \exp\!\bigg\{\!\! -\!\nobreak\hspace{0.09em}\ii\lambda\eta_j \bigg[ \muh_{\alpha,j}^\dagger\!\! \int\!\!\diff \bm z\sqrt{-g_j} F(\bm z)\geno^{\alpha}_j(\bm z)\!\nobreak\hspace{0.12em} +\!\nobreak\hspace{0.15em} \text{H.c.}\bigg] \! \bigg\},
\end{equation}
we can write
\begin{equation}\label{Eq: sequence of unitaries}
\hat U = \hat U_{N} \cdots \hat U_1.
\end{equation}
Notice that the $\hat U_{j}$ operators do not commute with each other, but the action of the time ordering operation in Eq.~\eqref{Eq: evolution operator} yields Eq.~\eqref{Eq: sequence of unitaries}. After the interaction, the joint state of the detector and the field is
\begin{equation}
\rhoh = \hat U \rhoh_0 \hat U^\dagger,
\end{equation}
where $\rhoh_0$ is the initial detector-field state. In particular, the final state of the detector after the interaction results from tracing over the field degrees of freedom, i.e.,
\begin{equation}\label{Eq: state of the detector}
\rhoh_\textsc{d}=\Tr_\phi\Big( \hat U \rhoh_0 \hat U^\dagger \Big) = \Tr_\phi\Big( \hat U_{N} \cdots \hat U_1 \rhoh_0 \hat U_{1}^\dagger \cdots \hat U_N^\dagger  \Big).
\end{equation}

\subsection{Approximation results}\label{Subsection: Approximation results}

Let us consider a single detector that couples to the field through a switching function $\xi(\tau)$ that is bounded, and continuous except for maybe a finite number of points (in the following, we call this switching function \textit{regular}). For this regular scenario, the joint evolution of the field and the detector is given by the unitary 
\begin{align}\label{Eq: evolution operator continuous}
\hat U_\xi & = \mathcal{T}_\tau \exp\!\bigg(\!\! -\!\ii\int\!\diff\maf z\, \hat h_\text{int}^\xi(\maf z)   \bigg) \\
& = \mathcal{T}_\tau \exp\!\bigg\{\!\! -\!\ii\lambda \int\! \diff\tau\,\xi(\tau) \nonumber \\
& \phantom{==}\times \bigg[\muh_\alpha^\dagger(\tau)\int\!\diff\bm z \sqrt{-g}\,F(\bm z)\,\geno^\alpha(\tau,\bm z) + \text{H.c.}\bigg]\! \bigg\}, \nonumber
\end{align}
where $g = \det g_{\mu\nu}(\tau,\bm z)$ is the determinant of the metric. The final state of the detector is thus
\begin{equation}
\rhoh_\textsc{d}^\xi = \Tr_\phi\Big(\hat U_\xi \rhoh_0 \hat U_\xi^\dagger \Big).
\end{equation}

It turns out that this scenario can be efficiently approximated using a train of delta interactions. Specifically, let $[0, T]$ be the interval where $\xi(\tau)$ has its support, %(or \textit{strong} support\footnote{Colloquially, we say a function with non-compact support is \textit{strongly} supported on a compact interval $I$ if it decreases at least exponentially out of said interval, in a way such that the contribution of the function out of the interval $I$ is negligible. The relevant idea here is, in any case, that the magnitude of the function out of $I$ contributes to the effect of interest at a scale that is several orders below the leading order.}, e.g., in the case of a Gaussian or Lorentzian switching)
where in particular we have $\xi(0)=\xi(T)=0$. The idea is to approximate the phenomenology of the regular switching function with that of a sequence of uniformly spaced Dirac delta pulses. Thus, we can define a train of $N$ sudden interactions associated with $\xi$ and a uniform partition \mbox{$\prt = \{t_j=T j/N,\,j=0,\hdots,N\}$} of the interval $[0,T]$ as
\begin{equation}\label{Eq: chi_xi}
\chi_\xi(\tau; N) = \frac{T}{N} \sum_{j=1}^{N} \xi\bigg(\frac{j-1/2}{N} T \bigg) \delta\bigg(\tau-\frac{j-1/2}{N}T\bigg). \!
\end{equation}
Recalling Eq.~\eqref{Eq: train of deltas}, this amounts to setting 
\begin{equation}
\tau_j = \frac{j-1/2}{N}T,
\end{equation}
and
\begin{equation}
\eta_j = \xi(\tau_j) \,(t_j-t_{j-1}) = \xi(\tau_j) \frac{T}{N}.
\end{equation}
The strength of each sudden interaction is thus determined by both the value of the switching $\xi$ at $t=\tau_j$ and the duration $T/N$ of the interval it is effectively `substituting'. Let us denote with $\rhoh_\textsc{d}(N)$ the final state of the detector according to Eq.~\eqref{Eq: state of the detector} after the train of $N$ sudden interactions $\chi_\xi(\tau;N)$ in Eq.~\eqref{Eq: chi_xi}. 

 Now, let us assume that the detector and the field are initially uncorrelated, so that the initial joint state is of the form
\begin{equation}
\rhoh_0 = \rhoh_{\textsc{d},0}\otimes\rhoh_\phi.
\end{equation}
We can use perturbation theory to write
\begin{equation}\label{Eq: Dyson series delta switching}
\rhoh_\textsc{d}(N) = \rhoh_\textsc{d,0} + \sum_{k=1}^{\infty} \rhoh_\textsc{d}^{(k)}(N)
\end{equation}
for the train of deltas, and
\begin{equation}\label{Eq: Dyson series regular switching}
\rhoh_\textsc{d}^\xi = \rhoh_\textsc{d,0} + \sum_{k=1}^{\infty} \rhoh_\textsc{d}^{(k)\xi}
\end{equation}
for the regular scenario, where $\rhoh_\textsc{d}^{(k)}(N)$ and $\rhoh_\textsc{d}^{(k)\xi}$ are proportional to $\lambda^k$. Then it can be shown that, for each $k\in\mathbb{Z}^+$, under the assumption that the switching function $\xi$, the smearing function $F$, and the initial state of the field $\rhoh_\phi$ are \textit{sufficiently regular}\footnote{To expand on what is meant by sufficiently regular, please see Appendix~\ref{Appendix: Proof of the approximation result}. In short, for most choices of switching and smearing functions, and for the initial states of the field usually employed in the literature (e.g., Hadamard states, both in flat and curved spacetimes), we expect the regularity condition to be satisfied.}, 
\begin{equation}\label{Eq: Approximation result order by order}
\lim_{N\to\infty}\rhoh_\textsc{d}^{(k)}(N) = \rhoh_\textsc{d}^{(k)\xi},
\end{equation}
in the weak operator topology. In fact, the convergence is \textit{at least} as fast as $1/N$. Specifically, for any linear functional $G$,
\begin{equation}\label{Eq: Approximation result order by order bis}
G\big[\rhoh_\textsc{d}^{(k)\xi}\big] = G\big[\rhoh_\textsc{d}^{(k)}(N)\big] + \mathcal{O}(1/N).
\end{equation}
Now, when $\xi$, $F$, and $\rhoh_\phi$ are such that the conditions for Eq.~\eqref{Eq: Approximation result order by order bis} are fulfilled \textit{for all} $k \in \mathbb{Z}^+$, then  the train of deltas approximates the regular scenario \textit{at all orders in perturbation theory}, since
\begin{equation}\label{Eq: Approximation result bis}
G\big[\rhoh_\textsc{d}^\xi\big] = G\big[\rhoh_\textsc{d}(N)\big] + \mathcal{O}(1/N)
\end{equation}
for any linear functional $G$, as long as $\lambda$ is within the radius of convergence of the Dyson series in Eqs.~\eqref{Eq: Dyson series delta switching} and~\eqref{Eq: Dyson series regular switching}. In particular, 
\begin{equation}\label{Eq: Approximation result}
\lim_{N\to\infty} \rhoh_\textsc{d}(N) = \rhoh_\textsc{d}^\xi
\end{equation}
in the weak operator topology. Notice that since it is possible to perform non-perturbative calculations with delta couplings (see, e.g.,~\cite{Simidzija2017nonperturbative,Sahu2022,Erickson2023Gaussian}), Eq.~\eqref{Eq: Approximation result bis} allows to efficiently approximate non-perturbative results, as opposed to the usual perturbative approximations based on truncating the Dyson expansion. While this is still an approximation (that can be made arbitrarily precise by increasing the density of delta couplings), it approximates the full sum of the perturbative expansion, and not only up to a given order. This is why we refer to this technique as a non-perturbative approximation.

\begin{comment}
Then, for $\lambda$ within the radius of convergence of the Dyson series for $\hat U_\xi$, it can be shown that, under the assumption that the state of the field $\rhoh_\phi$ is ``sufficiently regular'' (see Appendix~\ref{Appendix: Proof of the approximation result} for details), we have
\begin{equation}\label{Eq: Approximation result}
\lim_{N\to\infty} \rhoh_\textsc{d}(N) = \rhoh_\textsc{d}^\xi
\end{equation}
in the weak operator topology. In fact, the convergence is reasonably fast. Namely, under appropriate assumptions (see again Appendix~\ref{Appendix: Proof of the approximation result}), for any linear functional $G$,
\begin{equation}\label{Eq: Approximation result bis}
G\big[\rhoh_\textsc{d}^\xi\big] = G\big[\rhoh_\textsc{d}(N)\big] + \mathcal{O}(1/N).
\end{equation}
This result holds to all orders in perturbation theory, as long as we are within the radius of convergence of the Dyson series. Specifically, if we write
\begin{equation}
\rhoh_\textsc{d}^\xi = \rhoh_\textsc{d,0} + \sum_{k=1}^{\infty} \lambda^k \rhoh_\textsc{d}^{\xi(k)} 
\end{equation}
and
\begin{equation}
\rhoh_\textsc{d}(N) = \rhoh_\textsc{d,0} + \sum_{k=1}^{\infty} \lambda^k \rhoh_\textsc{d}(N)^{(k)},
\end{equation}
then we have that, again in the weak operator topology,
\begin{equation}\label{Eq: Approximation result order by order}
\lim_{N\to\infty}\rhoh_\textsc{d}(N)^{(k)} = \rhoh_\textsc{d}^{\xi(k)},
\end{equation}
for all $k \in \mathbb{Z}^+$.
\end{comment}

\section{Multiple detectors case}\label{Section: Multiple detectors}

In this section we will generalize the framework and results of Sec~\ref{Section: Single detector} to setups involving multiple particle detectors coupling to a quantum field.

\subsection{Setup}\label{Subsection: Setup multiple detectors}

%In this section we generalize the formalism introduced in Sec.~\ref{Subsection: Setup single detector} to setups involving multiple particle detectors. Specifically, 
We consider $D$ particle detectors in the general globally hyperbolic spacetime $(\mathcal{M},\maf g)$, whose centres of mass follow trajectories $\maf x_n(\tau_n)$ parametrized by their proper times $\tau_n$, within the regime in which each one of them can be fairly assumed to be Fermi-Walker rigid. 

Each detector is weakly coupled to a quantum field $\phih$. In order to jointly describe the interaction of all the detectors, we will assume that there exists a set of coordinates $(\tau,\bm z)$ of $\mathcal{M}$ such that, for each detector, there is a neighbourhood of its trajectory $\maf x_n(\tau_{n})$ where the coordinates $(\tau,\bm z)$ are a set of Fermi normal coordinates of the detector's centre of mass. In particular, if these neighbourhoods include the supports of each detector's smearing function, then we can write the interaction Hamiltonian weight as  
\begin{equation}\label{Eq: interaction Hamiltonian weight multiple detectors}
\!\!\!\!\hat h_\text{int}(\tau,\bm z) \!=\! \sum_{n=1}^D \!\lambda_n \chi_n(\tau)\!\nobreak\hspace{0.1em} \Big[\!\nobreak\hspace{0.1em} F_n(\bm z) \muh_{n,\alpha}^\dagger(\tau) \geno^\alpha_n(\tau,\bm z)\!\nobreak\hspace{0.05em} +\!\nobreak\hspace{0.05em} \text{H.c.}\!\nobreak\hspace{0.1em} \Big]\!\nobreak\hspace{0.1em}.\!\!
\end{equation}
In complete analogy with Eq.~\eqref{Eq: interaction Hamiltonian}, here $\lambda_n$, $\chi_n$, and $F_n$ are the coupling strength, the switching, and the smearing functions of the $n$-th detector, while $\muh_n^\alpha$ is an arbitrary tensor operator of the $n$-th detector, and $\geno_n^\alpha$ is the generic tensor field operator it couples to. 

In the case of multiple delta couplings, the switching functions can be written as
\begin{equation}\label{Eq: def chi_n,xi_n}
\chi_n(\tau) = \sum_{j=1}^{N_n} \eta_{n,j} \,\delta(\tau-\tau_{n,j}),
\end{equation}
where $N_n$ is the number of times that the $n$-th detector couples to the field, and $\eta_{n,j}$ is the strength of the coupling at time $\tau_{n,j}$. With the interaction Hamiltonian weight given by Eq.~\eqref{Eq: interaction Hamiltonian weight multiple detectors}, the evolution operator in the interaction picture is given by
\begin{align}\label{Eq: complete unitary for multiple detectors}
\!\!\hat U & = \mathcal{T}_\tau\!\exp\!\bigg(\!\!-\ii\!\int\!\diff\bm z\, \hat h_{\text{int}}(\maf z)  \bigg) \\
& = \mathcal{T}_\tau\exp\!\bigg\{ \!\!-\ii\sum_{n=1}^D \lambda_n \sum_{j=1}^{N_n} \eta_{n,j} \nonumber \\
& \phantom{====}\times \bigg[ \muh_{n,j,\alpha}^\dagger\! \int\!\diff\bm z\sqrt{-g_{n,j}}\, F(\bm z) \geno_{n,j}^{\alpha}(\bm z)\! +\! \text{H.c.} \bigg] \!\bigg\}, \nonumber \\
& = \mathcal{T}_\tau \prod_{n,j}\exp\!\bigg\{ \!\!-\ii \lambda_n \eta_{n,j} \nonumber \\
& \phantom{====}\times \bigg[ \muh_{n,j,\alpha}^\dagger\! \int\!\diff\bm z\sqrt{-g_{n,j}} \,F(\bm z) \geno_{n,j}^{\alpha}(\bm z)\! +\! \text{H.c.} \bigg] \!\bigg\}, \nonumber
\end{align} 
where in the last equality, $n$ is running from 1 to $D$, and $j$ runs from 1 to $N_n$, for each $n$. Here, in analogy with the notation used for one detector in Sec.~\ref{Subsection: Setup single detector}, \mbox{$g_{n,j} \equiv \det g_{\mu\nu}(\tau_{n,j},\bm z)$} is the determinant of the metric at $(\tau_{n,j},\bm z)$, and similarly \mbox{$\muh_{n,j,\alpha}^\dagger \equiv \muh_{n,\alpha}^\dagger(\tau_{n,j})$}, and \mbox{$\geno_{n,j}^\alpha(\bm z) \equiv \geno^\alpha(\tau_{n,j},\bm z)$}. Notice that Eq.~\eqref{Eq: complete unitary for multiple detectors} implies that we can write $\hat U$ as a product of unitaries in analogy with Eq.~\eqref{Eq: sequence of unitaries}, where each unitary of the sequence is of the form 
\begin{align}\label{Eq: each unitary multiple detectors}
	\hat U_{n,j} & = \exp\bigg\{\!\!-\ii \lambda_n \eta_{n,j} \\
	& \phantom{===} \times\bigg[  \muh_{n,j,\alpha}^\dagger\!\int\!\diff\bm z \sqrt{-g_{n,j}} \,F(\bm z) \geno^\alpha_{n,j}(\bm z)\!+\!\text{H.c.} \bigg]\! \bigg\}, \nonumber
\end{align}
for some $n,j$. Notice that this amounts to ordering all the $\tau_{n,j}$ and applying the $\hat U_{n,j}$ following the same order. If two times $\tau_{n,j}$ and $\tau_{n',j'}$ coincide, for some $n\neq n'$, then, since we can assume that both detectors do not overlap, the corresponding unitaries $\hat U_{n,j}$ and $\hat U_{n',j'}$ commute, and therefore it does not matter how we decide to order them. This is because if the supports of detectors $n$ and $n'$ at \mbox{$\tau_{n,j}=\tau_{n',j'}$} do not overlap, then the field operators in $\hat U_{n,j}$ and $\hat U_{n',j'}$ have support in spacelike separated regions, and thus commute. Since the detector operators also commute for being associated with different detectors (and then defined over different Hilbert spaces), the exponents of the exponentials that define $\hat U_{n,j}$ and $\hat U_{n',j'}$ commute, and hence so do $\hat U_{n,j}$ and $\hat U_{n',j'}$ themselves.

Finally, after the interaction, the joint state of the $D$ detectors and the field is
\begin{equation}
\rhoh = \hat U \rhoh_0 \hat U^\dagger,
\end{equation}
where $\rhoh_0$ is the initial state of the whole system. In particular, the final joint state of the $D$ detectors after the interaction is obtained by tracing over the field degrees of freedom,
\begin{equation}\label{Eq: state of multiple detectors}
	\rhoh_\textsc{d}=\Tr_\phi\Big( \hat U \rhoh_0 \hat U^\dagger \Big) = \Tr_\phi\Big( \hat U_{N} \cdots \hat U_1 \rhoh_0 \hat U_{1}^\dagger \cdots \hat U_N^\dagger  \Big).\!
\end{equation}

\subsection{Approximation results}\label{Subsection: Approximation results multiple detectors}

Let us consider $D$ particle detectors under the same conditions as in the previous subsection, except for their coupling to the field, which now happens through switching functions $\xi_n(\tau)$ that are bounded, and continuous except for maybe a finite number of points (\textit{regular}, as we called this kind of switching in Sec.~\ref{Subsection: Approximation results}). For this regular scenario, the joint evolution of all the particle detectors and the field is given by the unitary 
\begin{align}
\hat U_\xi & = \mathcal{T}_\tau \exp\!\bigg(\!\! -\!\ii\int\diff\bm z\, \hat h_{\text{int}}^{\xi}(\maf z) \bigg) \\
& = \mathcal{T}_\tau\exp\!\bigg\{ \!\!-\!\ii \sum_{n=1}^D \lambda_n \int\diff\tau \,\xi_n(\tau) \nonumber \\
& \phantom{==} \times \bigg[ \muh_{n,\alpha}^\dagger(\tau)\! \int\!\diff\bm z \sqrt{-g}\, F_n(\bm z) \, \geno_n^\alpha (\tau,\bm z) + \text{H.c.} \bigg]\!\bigg\}, \nonumber
\end{align}
where as before $g=\det g_{\mu\nu}(\tau,\bm z)$ is the determinant of the metric. The final joint state of the $D$ detectors is 
\begin{equation}
	\rhoh_\textsc{d}^\xi = \Tr_\phi\Big( \hat U_\xi \rhoh_0 \hat U_\xi^\dagger \Big).
\end{equation}

As in the single detector case, it turns out that this scenario can be efficiently approximated using a train of delta interactions. Specifically, let us denote with \mbox{$[T_n,T_n+\Delta T_n]$} the interval where $\xi_n(\tau)$ has its support, and in particular \mbox{$\xi_n(T_n)=\xi_n(T_n+\Delta T_n)=0$}. We can then define the train of $N$ sudden interactions associated with $\xi_n$ and a uniform partition \mbox{$\mathcal{P}_n=\{t_{n,j}=T_n +  j \Delta T_n /N,\, j=1,\hdots,N\}$} of the interval $[T_n, T_n+\Delta T_n]$ as
\begin{align}\label{Eq: chi_xi multiple detectors}
	\chi_{\xi_n}(\tau; N) & = \frac{\Delta T_n}{N} \sum_{j=1}^N \xi_n\bigg(T_n + \frac{j-1/2}{N}\Delta T_n  \bigg) \\
	& \phantom{=====} \times \delta\bigg( \tau - T_n - \frac{j-1/2}{N}\Delta_n T \bigg). \nonumber
\end{align} 
From Eq.~\eqref{Eq: def chi_n,xi_n}, this means having $N_n=N$ for all $n$, setting
\begin{equation}
	\tau_{n,j} = T_n + \frac{j-1/2}{N} \Delta T_n,
\end{equation}
and
\begin{equation}
	\eta_{n,j} = \xi_n(\tau_{n,j})(t_{n,j} - t_{n,j-1}) = \xi_n(\tau_{n,j}) \frac{\Delta T_n}{N}.
\end{equation}
This is in complete analogy with the single detector case presented in Sec.~\ref{Subsection: Approximation results}, i.e., the strength of each sudden interaction is determined by both the value of the switching $\xi_n$ at the corresponding time, and the duration of the interval it stands for. 

Now, let us assume that, initially, the detectors and the field are mutually uncorrelated, so that the initial joint state is of the form
\begin{equation}
	\rhoh_0 = \rhoh_{\textsc{d},0} \otimes \rhoh_\phi,
\end{equation}
where $\rhoh_{\textsc{d},0}$ is the initial joint state of the $D$ detectors (which in particular can be correlated). Then, we can use perturbation theory to write
\begin{equation}\label{Eq: series delta multiple}
	\rhoh_\textsc{d}(N) = \sum_{\bm l \in \mathbb{N}_{0}^D} \rhoh_\textsc{d}^{(\bm l)}(N)
\end{equation}
for the train of deltas (where $\mathbb{N}_0 = \mathbb{Z}^+ \cup \{0\}$), and
\begin{equation}\label{Eq: series regular multiple}
	\rhoh_\textsc{d}^\xi = \sum_{\bm l \in \mathbb{N}_0^D}\rhoh_\textsc{d}^{(\bm l)\xi}
\end{equation}
for the regular scenario, where $\rhoh_\textsc{d}^{(\bm l)}(N)$ and $\rhoh_\textsc{d}^{(\bm l)\xi}$ are proportional to
\begin{equation}
\prod_{n=1}^D \lambda_n^{l_n}
\end{equation}
for each $\bm l \in \mathbb{N}_0^D$, and the first term for both series is just the initial state, i.e.,
\begin{equation}
	\rhoh_{\textsc{d}}^{(\bm 0)}(N) = \rhoh_\textsc{d}^{(\bm 0)\xi} = \rhoh_{\textsc{d},0}.
\end{equation}
It can be shown that, for each $\bm l \in \mathbb{N}_0^D$, under the assumption that the switching functions $\xi_n$, the smearing functions $F_n$, and the initial state of the field $\rhoh_\phi$ are \textit{sufficiently regular}\footnote{As in Sec.~\ref{Subsection: Approximation results}, see Appendix~\ref{Appendix: Proof of the approximation result} for details on what is meant by sufficiently regular.},
\begin{equation}\label{Eq: Approximation result order by order multiple}
	\lim_{N\to\infty} \rhoh_\textsc{d}^{(\bm l)}(N) = \rhoh_\textsc{d}^{(\bm l)\xi}
\end{equation}
in the weak operator topology, and the convergence is \textit{at least} as fast as $1/N$, i.e., for any linear functional $G$,
\begin{equation}\label{Eq: Approximation result order by order multiple bis}
	G\big[ \rhoh_\textsc{d}^{(\bm l)\xi} \big] = G \big[ \rhoh_\textsc{d}^{(\bm l)}(N) \big] + \mathcal{O}(1/N).
\end{equation}
In particular, when $\xi_n$, $F_n$, and $\rhoh_\phi$ are such that the conditions for Eq.~\eqref{Eq: Approximation result order by order multiple bis} are fulfilled \textit{for all} $\bm l \in \mathbb{N}_0^D$, then 
\begin{equation}\label{Eq: Approximation result multiple bis}
G\big[ \rhoh_\textsc{d}^\xi \big] = G\big[ \rhoh_\textsc{d}(N)\big] + \mathcal{O}(1/N)
\end{equation}
for any linear functional $G$, as long as $\lambda$ is within the radius of convergence of the Dyson series in Eqs.~\eqref{Eq: series delta multiple} and~\eqref{Eq: series regular multiple}. In particular, the dynamics induced by the train of deltas approximates those of the regular scenario \textit{at all orders in perturbation theory}, i.e.,
\begin{equation}\label{Eq: Approximation result multiple}
	\lim_{N\to\infty} \rhoh_\textsc{d}(N) = \rhoh_\textsc{d}^\xi
\end{equation}
in the weak operator topology. This means that the same conclusions that we obtained for the single detector scenario---and in particular the possibility of efficiently approximating non-perturbative results---can be extended to setups involving multiple detectors as well. Notice that the density matrix that we get to approximate describes the joint state of all the detectors, and not just one of them. This means that we can use this technique to extract conclusions regarding not only the state of each individual detector, but also the correlations between them.

\begin{comment}
Specifically, let $[0,T]$ be the interval that contains the supports of all the switching functions \mbox{$\xi_1(\tau),\hdots,\xi_D(\tau)$}, where in particular $\xi_n(0)=\xi_n(T)=0$ for all $n$. For each particle detector, we can define a train of $N$ sudden interactions associated with $\xi_n$ and a uniform partition $\mathcal{P}= \{t_{j}=j/N,\,j=0,\hdots,N\}$ of the interval $[0,T]$ as
\begin{align}
	\chi_{n,\xi_n}(\tau;N) & = \frac{T}{N} \sum_{j=1}^{N} \xi_n\bigg( \frac{j-1/2}{N}T  \bigg) \\
	& \phantom{====}\times \delta\bigg( \tau - \frac{j-1/2}{N} T \bigg). \nonumber
\end{align}
From Eq.~\eqref{Eq: def chi_n,xi_n}, this means having $N_n=N$ for all $n$, setting
\begin{equation}
	\tau_{n,j} = \frac{j-1/2}{N} T,
\end{equation}
and
\begin{equation}
	\eta_{n,j} = \xi_n(\tau_{n,j}) (t_{j} - t_{j-1}) = \xi_n(\tau_{n,j}) \frac{T}{N}. 
\end{equation}
This is in complete analogy with the single detector case presented in Sec.~\ref{Subsection: Approximation results}, i.e., the strength of each sudden interaction is determined by both the value of the switching $\xi_n$ and the duration of the interval it stands for.
\end{comment}

\section{Examples}\label{Section: Examples}

After introducing the approximation method in Secs.~\ref{Section: Single detector} and~\ref{Section: Multiple detectors}, here we will use it in two specific scenarios, involving, respectively, one and two detectors. For each case, we will also analyze the rate of convergence of the approximate density matrices towards the exact ones, showing explicitly that the approximations are at least as efficient as guaranteed by the results stated in Secs.~\ref{Subsection: Approximation results} and~\ref{Subsection: Approximation results multiple detectors} (and proved in Appendix~\ref{Appendix: Proof of the approximation result}).

\subsection{Single detector example}\label{Subsection: Single detector example}

In this example, we consider a two-level Unruh-DeWitt particle detector~\cite{Unruh,DeWitt} at rest in a ($1+3$)-dimensional Minkowski spacetime, with ground and excited states $\ket{g}$ and $\ket{e}$, separated by an energy gap $\Omega$, and a Gaussian shape,
\begin{equation}
F(\bm x) = \frac{1}{\sqrt{\pi^3} \sigma^3} \, e^{-\bm x^2/\sigma^2}.
\end{equation}
The detector is coupled to a massless scalar quantum field via an interaction Hamiltonian
\begin{equation}
\hat H_{\text{int}}(t)=\lambda \,\xi(t) \muh(t) \!\int\!\diff\bm x\, F(\bm x) \,\phih(t,\bm x),
\end{equation}
where $\lambda$ is the coupling strength as in Eq.~\eqref{Eq: interaction Hamiltonian}, and $\muh$ is the monopole moment of the detector in the interaction picture,
\begin{equation}
\muh(t) = \proj{g}{e}\,e^{-\ii \Omega t} + \proj{e}{g}\,e^{\ii\Omega t}.
\end{equation}
The field operator $\phih$ can be expanded in plane-wave modes as
\begin{equation}\label{Eq: scalar field plane waves expansion}
\hat\phi (t,\bm x)= \int \frac{\diff \bm k}{\sqrt{(2\pi)^3 2|\bm k|}} \, \big(\hat a_{\bm k} e^{-\ii(|\bm k|t - \bm k \cdot \bm x)} + \text{H. c.}\big).
\end{equation} 
Finally, $\xi$ is a switching function supported in $[0,T]$ that is bounded and continuous except for maybe a finite number of points, as in Sec.~\ref{Subsection: Approximation results}. %In this example, we chose $T$ such that $\Omega T=1$.

We take the detector and the field to be initially uncorrelated and in their respective ground states,
\begin{equation}
\rhoh_0 = \rhoh_{\textsc{d},0} \otimes \proj{0}{0}, \qquad \rhoh_{\textsc{d},0} = \proj{g}{g}.
\end{equation}
Using perturbation theory, we can write the state of the detector at the end of the interaction process as
\begin{equation}
\rhoh_\textsc{d} = \rhoh_{\textsc{d},0} + \rhoh^{(2)}_{\textsc{d}} + \mathcal{O}(\lambda^4),
\end{equation}
where the terms of odd order in the perturbative expansion are zero because the vacuum $\ket{0}$ is a zero-mean Gaussian state, and therefore its odd-point functions are zero. Thus, the final state of the detector is diagonal in the $\{\ket{g},\ket{e}\}$ basis, and can be completely characterized by the excitation probability,
\begin{equation}
P_{\text{e}} = \bra{e} \rhoh_\textsc{d} \ket{e} = \bra{e} \rhoh^{(2)}_{\textsc{d}} \ket{e} + \mathcal{O}(\lambda^4).
\end{equation}
For a massless scalar quantum field, up to second order in $\lambda$ (see, e.g.,~\cite{Erickson2021click}),
\begin{equation}
P_{\text{e}}=\lambda^2 \int \frac{\diff\bm k}{ (2\pi)^3 2 |\bm k|} \,|\tilde{\xi}(|\bm k|+\Omega)|^2 |\tilde{F}(\bm k)|^2, 
\end{equation}
where $\tilde{\xi}$ and $\tilde{F}$ are the Fourier transforms\footnote{The convention we followed here for the definition of the Fourier transform is that
\begin{equation}
\tilde{G}(\bm v) \coloneqq \int \diff \bm u \, G(\bm u) \, e^{-\ii \bm u \cdot \bm v}.
\end{equation}
} of $\xi$ and $F$, respectively. For the Gaussian shape specified before, which has spherical symmetry, we get
\begin{equation}\label{Eq: excitation probability example}
P_{\text{e}} = \frac{\lambda^2}{4\pi^2} \int_0^\infty \diff |\bm k| \, |\bm k| e^{-|\bm k|^2\sigma^2/2} \,|\tilde{\xi}(|\bm k|+\Omega)|^2.
\end{equation}

As a particular application of the result stated in Sec.~\ref{Subsection: Approximation results}, Eq.~\eqref{Eq: excitation probability example} can be approximated by the transition probability for a scenario where the detector couples to the field through the train of $N$ sudden interactions given by
\begin{equation}
\chi_\xi(t; N) = \frac{T}{N} \sum_{j=1}^{N} \xi\bigg(\frac{j-1/2}{N}T\bigg) \delta\bigg(t-\frac{j-1/2}{N}T\bigg).\!
\end{equation}
In this case, the excitation probability is given (up to second order in $\lambda$) by
\begin{align}\label{Eq: excitation probability approximate}
P_{\text{e}}(N) &  = \frac{\lambda^2 T^2}{N^2}\!\! \sum_{j,j'=1}^N \!\xi\bigg(\frac{j-1/2}{N}T\bigg)\xi\bigg(\frac{j'-1/2}{N}T\bigg) \!\\
& \!\times \frac{ e^{\ii \Omega T (j-j')/N}}{4\pi^2} \!\!\int_0^\infty\!\!\! \diff |\bm k| |\bm k| e^{-|\bm k|^2\sigma^2/2} e^{\ii |\bm k| T (j-j')/N}. \nonumber
\end{align}
In the following we will see, for specific switching functions $\xi$, how Eq.~\eqref{Eq: excitation probability approximate} approximates Eq.~\eqref{Eq: excitation probability example} increasingly well as $N$ increases, and does so with the predicted efficiency.

\subsubsection{Heaviside switching}

The first example that we showcase is a switching function that is constant in its support,
\begin{equation}\label{Eq: Heaviside switching}
\xi(t) = \textsc{I}_{(0,T)}(t),
\end{equation}
where $\textsc{I}_{(0,T)}$ is the indicator function that equals 1 in the interval $(0,T)$, and zero everywhere else. In this case,
\begin{equation}\label{Eq: Fourier transform of Heaviside switching}
\tilde{\xi}(k) = \frac{2 e^{-\ii k T /2}}{k} \sin\!\bigg( \frac{k T}{2} \bigg).
\end{equation}
From Eq.~\eqref{Eq: excitation probability example}, we have
\begin{align}
P_{\text{e}} & = \frac{\lambda^2}{\pi^2} \int_0^\infty \diff |\bm k| \, \frac{|\bm k|\, e^{-|\bm k|^2 \sigma^2/2}}{(|\bm k| + \Omega)^2} \sin^2\!\bigg[ \frac{(|\bm k|+\Omega) T}{2} \bigg] \\
& = \frac{\lambda^2}{\pi^2} \int_0^\infty \diff\kappa\, \frac{\kappa e^{-\kappa^2 s^2 /2}}{(\kappa + \gamma)^2} \sin^2\!\bigg( \frac{\kappa + \gamma}{2} \bigg), \label{Eq: excitation probability Heaviside}
\end{align}
where the second line explicitly shows that the result only depends on the dimensionless parameters \mbox{$\gamma = \Omega T$}, and \mbox{$s = \sigma/T$}, and we also denoted \mbox{$\kappa = |\bm k| T$}. Similarly, from Eq.~\eqref{Eq: excitation probability approximate} we get the excitation probability associated with the train of sudden interactions $\chi_\xi$ that approximates the Heaviside switching,
\begin{align}\label{Eq: excitation probability Heaviside approximate}
P_{\text{e}}(N)= \frac{\lambda^2}{N^2} \!\!\! \sum_{j,j'=1}^{N}\!\!\! \frac{e^{\ii\gamma(j-j')/N}}{4\pi^2}\!\! \int_0^\infty\!\!\!\diff\kappa\, \kappa e^{-\kappa^2 s^2/2} e^{\ii \kappa (j-j')/N}.
\end{align}

In Fig.~\ref{Fig: 1d Heaviside}a we see how $P_{\text{e}}(N)$ as given in Eq.~\eqref{Eq: excitation probability Heaviside approximate} converges to the exact value $P_{\text{e}}$ given in Eq.~\eqref{Eq: excitation probability Heaviside}, for the case \mbox{$\gamma=s=1$}. Moreover, Fig.~\ref{Fig: 1d Heaviside}b shows that the rate of convergence of the approximation is as fast as $1/N^2$, which is better than the minimum rate guaranteed by the approximation result of Sec.~\ref{Subsection: Approximation results}.

\begin{figure*}
\begin{center}
\includegraphics[scale=0.82]{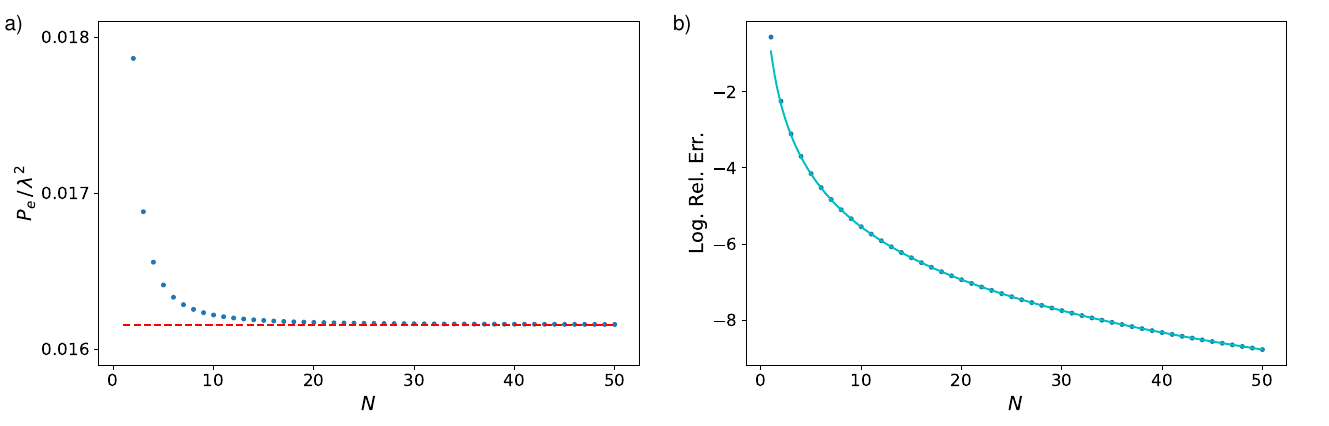}
\caption{a) Excitation probability of a two-level UDW detector coupled to a quantum scalar field through a train of delta couplings approximating a Heaviside switching, as a function of the number of couplings $N$. Here we set the relevant parameters $\gamma$ and $s$ to 1. The constant red dashed line marks the value of the excitation probability for the exact Heaviside switching. b) Logarithmic relative error of the excitation probability for the Heaviside switching approximated by delta couplings, as a function of the number of couplings $N$. The function represented by the solid cyan line is proportional to $1/N^2$, marking the rate of convergence of the approximation.}
\label{Fig: 1d Heaviside}
\end{center}
\end{figure*}

\subsubsection{Truncated Gaussian switching}

The second example that we consider is a Gaussian switching where the Gaussian tails are cut beyond $q$ times its variance:
\begin{equation}
\xi(t) = \exp\bigg[ -\bigg( \frac{t}{qT} - \frac{1}{2q} \bigg)^{\!\!2}\, \bigg]\, \textsc{I}_{(0,T)}(t).
\end{equation}
Notice that the Gaussian reaches its maximum at \mbox{$t=T/2$}, i.e., the centre of the switching's compact support. Its Fourier transform reads
\begin{equation}
\!\!\!\tilde{\xi}(k)\! =\! \sqrt{\pi}q T e^{-\ii k T /2} e^{- k^2 q^2 T^2/ 4}\! \Re\!\bigg[\! \erf\!\bigg(\frac{1}{2q}+\frac{\ii k q T}{2} \bigg) \! \bigg].\!
\end{equation}
From Eq.~\eqref{Eq: excitation probability example}, we have
\begin{align}
P_{\text{e}} & = \frac{\lambda^2 q^2 T^2}{4 \pi} \int_0^\infty \diff |\bm k| \, |\bm k| \, e^{-|\bm k|^2\sigma^2/2}\,e^{-(|\bm k|+\Omega)^2q^2T^2/2} \nonumber \\
& \phantom{======} \times \Re\bigg\{ \erf\bigg[ \frac{1}{2 q} + \frac{\ii (|\bm k|+\Omega) qT}{2} \bigg] \bigg\}^2 \\
& = \frac{\lambda^2 q^2}{4\pi} \int_0^\infty \diff\kappa\, \kappa\, e^{-\kappa^2 s^2/2}\,e^{-(\kappa+\gamma)^2q^2/2} \nonumber\\
& \phantom{======} \times \Re\bigg\{ \erf\bigg[ \frac{1}{2q} + \frac{\ii (\kappa+\gamma)q}{2} \bigg] \bigg\}^2, \label{Eq: excitation probability Gaussian}
\end{align}
where, as before, we have shown the explicit dependence on the dimensionless parameters \mbox{$\gamma = \Omega T$}, and \mbox{$s = \sigma/T$}, and the integration is performed with respect to the dimensionless variable \mbox{$\kappa = |\bm k| T$}. Again, from Eq.~\eqref{Eq: excitation probability approximate} we get the excitation probability associated with the train of delta couplings $\chi_\xi$ that yields a transition probability that approximates that of the truncated Gaussian switching,
\begin{align}\label{Eq: excitation probability Gaussian approximate}
\!\!P_{\text{e}}(N) & = \frac{\lambda^2}{N^2} \!\!\! \sum_{j,j'=1}^{N}\!\!\! e^{-(2j-N-1)^2\!/4q^2N^2} e^{-(2j'-N-1)^2\!/4q^2N^2} \nonumber \\
&\phantom{\;} \times \frac{e^{\ii\gamma(j-j')/N}}{4\pi^2}\!\! \int_0^\infty\!\!\!\diff\kappa\, \kappa e^{-\kappa^2 s^2/2} e^{\ii \kappa (j-j')/N}.
\end{align}

In Fig.~\ref{Fig: 1d Gaussian}a we see how $P_{\text{e}}(N)$---as given in Eq.~\eqref{Eq: excitation probability Gaussian approximate}---converges to the value of $P_{\text{e}}$ given in Eq.~\eqref{Eq: excitation probability Gaussian}, for the case \mbox{$\gamma=s=q=1$}. Moreover, Fig.~\ref{Fig: 1d Gaussian}b shows that the rate of convergence of the approximation is once more as fast as $1/N^2$, which is again better than the minimum rate guaranteed by the approximation result of Sec.~\ref{Subsection: Approximation results}.

\begin{figure*}
\begin{center}
\includegraphics[scale=0.82]{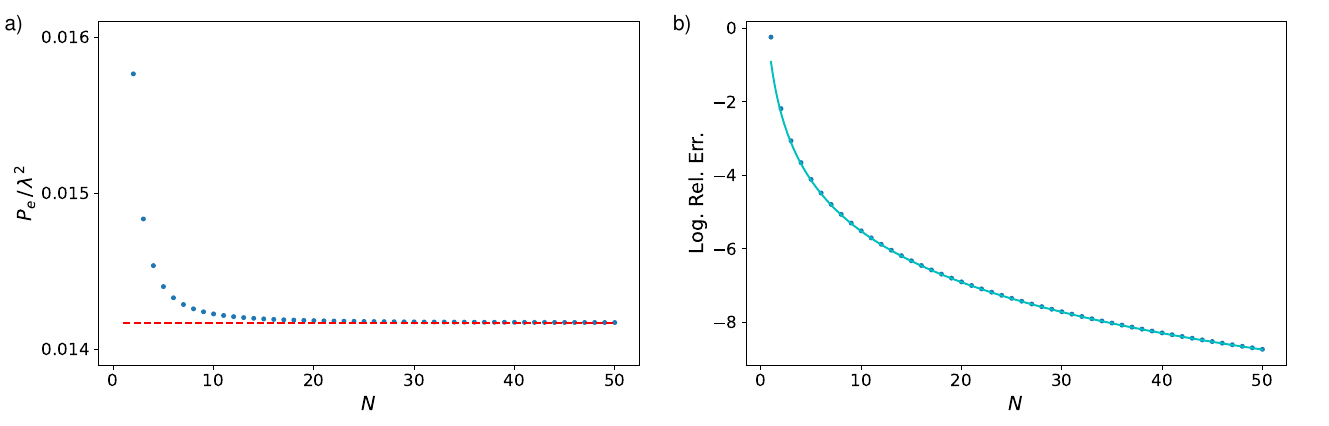}
\caption{a) Excitation probability of a two-level UDW detector coupled to a quantum scalar field through a train of delta couplings approximating a Gaussian switching truncated at $q=1$ variances, as a function of the number of couplings $N$. Here we set the relevant parameters $\gamma$ and $s$ to 1. The constant red dashed line marks the value of the excitation probability for the exact truncated Gaussian switching. b) Logarithmic relative error of the excitation probability for the Gaussian switching approximated by delta couplings, as a function of the number of couplings $N$. The function represented by the solid cyan line is proportional to $1/N^2$, marking the rate of convergence of the approximation.}
\label{Fig: 1d Gaussian}
\end{center}
\end{figure*}

%\tcb{It is worth noticing that the approximation of the dynamics through train of deltas naturally implements the compact support of the switching function, rendering unnecessary to perform approximations such as that of assuming the Gaussian switching to be ``effectively compact''~\cite{Edu2015,Pipo2023}.}

\subsubsection{Smooth bump switching}

As a third example, let us consider the switching to be the $C^\infty$  bump function given by
\begin{equation}
\xi(t) = \begin{dcases*} 
\exp\bigg[ -\frac{T^2}{4t(T-t)} \bigg] & if $0< t < T$,\\
\,0 & otherwise,\\
\end{dcases*}
\end{equation}
which is compactly supported in $[0,T]$, infinitely differentiable on $\R{}$, and reaches its maximum at $t=T/2$. Its Fourier transform does not admit a closed form. However, let us define
\begin{equation}
\beta(t) = \begin{dcases*} 
\exp\bigg[ -\frac{1}{4t(1-t)} \bigg] & if $0 < t < 1$,\\
\,0 & otherwise,\\
\end{dcases*}
\end{equation}
which does not depend on $T$. This function is related to $\xi$ by $\xi(t)=\beta(t/T)$. Thus, their Fourier transforms satisfy
\begin{equation}
\tilde{\xi}(k) = T \tilde{\beta}(k T),
\end{equation}
and hence, from Eq.~\eqref{Eq: excitation probability example}, we can write
\begin{equation}\label{Eq: excitation probability bump}
P_{\text{e}} = \frac{\lambda^2}{4 \pi^2} \int_0^{\infty} \diff\kappa\,\kappa \,e^{-\kappa^2 s^2/2} |\tilde{\beta}(\kappa+\gamma)|^2,
\end{equation}
which shows once again the complete dependence of $P_{\text{e}}$ on the dimensionless parameters $\gamma$ and $s$. Meanwhile, from Eq.~\eqref{Eq: excitation probability approximate} we get the excitation probability associated with the train of delta couplings $\chi_\xi$ used to approximate that of the bump switching,
\begin{align}\label{Eq: excitation probability bump approximate}
P_{\text{e}}(N) & = \frac{\lambda^2}{N^2}  \sum_{j,j'=1}^{N} \exp\bigg[-\frac{N^2}{4(j-1/2)(N-j+1/2)}\bigg] \nonumber \\
& \phantom{=====\!} \times \exp\bigg[-\frac{N^2}{4(j'-1/2)(N-j'+1/2)}\bigg]  \nonumber \\
&\phantom{\;} \times \frac{e^{\ii\gamma(j-j')/N}}{4\pi^2}\!\! \int_0^\infty\!\!\!\diff\kappa\, \kappa e^{-\kappa^2 s^2/2} e^{\ii \kappa (j-j')/N}.
\end{align}

In Fig.~\ref{Fig: 1d Bump}a we see how, for the case \mbox{$\gamma=s=1$}, the approximated values $P_{\text{e}}(N)$ (as given in Eq.~\eqref{Eq: excitation probability bump approximate}) converge to the exact value of $P_{\text{e}}$ (given in Eq.~\eqref{Eq: excitation probability bump}). Moreover, Fig.~\ref{Fig: 1d Bump}b shows that the error incurred by the approximation is upper bounded by a function that decays faster than $1/N^5$,  much faster than the minimum rate guaranteed by the  result of Sec.~\ref{Subsection: Approximation results}. In fact, the increase of the rate of convergence of the approximation in this example with respect to the two previous ones is not surprising, since the approximation results in~\cite{Chui1971} suggest that switching functions of higher differentiability class should admit tighter convergence bounds.

\begin{figure*}
\begin{center}
\includegraphics[scale=0.82]{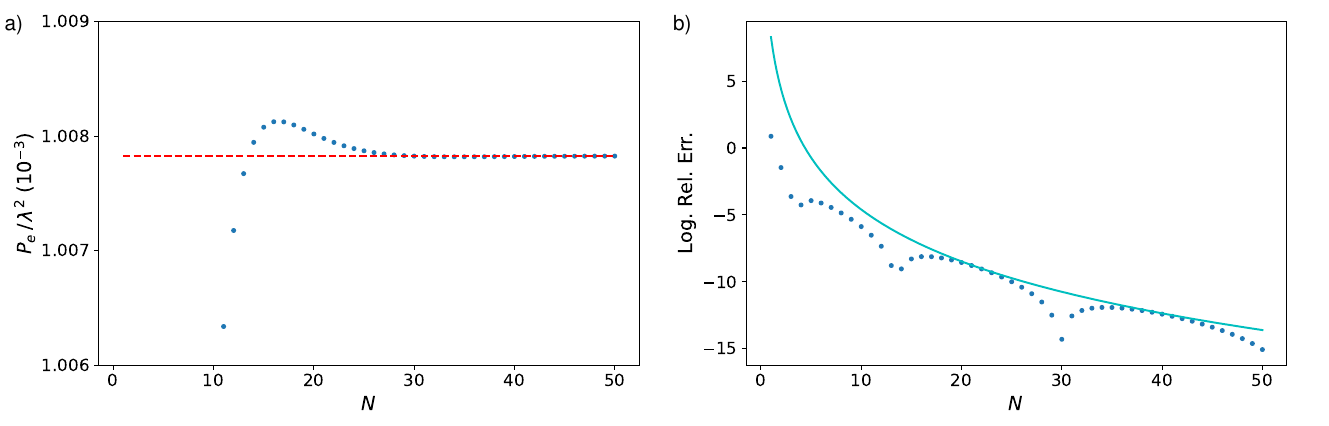}
\caption{a) Excitation probability of a two-level UDW detector coupled to a quantum scalar field through a train of delta couplings approximating a bump switching, as a function of the number of couplings $N$. Here we set the relevant parameters $\gamma$ and $s$ to 1. The constant red dashed line marks the value of the excitation probability for the exact bump switching. b) Logarithmic relative error of the excitation probability for the bump switching approximated by delta couplings, as a function of the number of couplings $N$. The function represented by the solid cyan line is proportional to $1/N^{5.6}$, upper bounding the rate of convergence of the approximation.}
\label{Fig: 1d Bump}
\end{center}
\end{figure*}

\subsection{Two detectors example}\label{Subsection: Two detectors example}

The setup considered for this example consists of two identical two-level Unruh-DeWitt detectors, A and B, both at rest at positions $\bm x_\textsc{a}$ and $\bm x_{\textsc{b}}$ in a \mbox{$(1+3)$-di}mensional Minkowski spacetime. We consider the ground and excited states of both detectors to be separated by the same energy gap $\Omega$, and their shapes to be (normalized) hard spheres of radius $R$,
\begin{equation}
F_i(\bm x) = \frac{3}{4 \pi R^3}\,\theta(R-|\bm x - \bm x_i|), \quad i=\textsc{A},\textsc{B}.
\end{equation}
As in the single detector example, here we consider a linear coupling between the detectors and a massless scalar quantum field, given in the interaction picture by the interaction Hamiltonian
\begin{align}
\hat H_{\text{int}}(t) & = \lambda \xi(t) \bigg[ \muh_\textsc{a}(t) \int\diff\bm x \, F_\textsc{a}(\bm x)\, \phih(t,\bm x) \nonumber \\ 
&\phantom{======} + \muh_\textsc{b}(t) \int\diff\bm x\, F_{\textsc{b}}(\bm x) \,\phih(t,\bm x)\bigg],
\end{align}
where $\lambda$ is the coupling strength as in Eq.~\eqref{Eq: interaction Hamiltonian}, $\muh_i$ is the monopole moment of detector $i$, which can be written in the interaction picture as
\begin{equation}
\muh_i(t) = \proj{g_i}{e_i}\,e^{-\ii \Omega t} + \proj{e_i}{g_i}\,e^{\ii\Omega t},
\end{equation}
the field operator $\phih$ can be expanded in plane-wave modes as in Eq.~\eqref{Eq: scalar field plane waves expansion}, and finally $\xi$ is again a switching function supported in $[0,T]$ that is bounded and continuous except for maybe a finite number of points, as in Sec.~\ref{Subsection: Approximation results multiple detectors}. Notice that, for simplicity, in this example we have chosen the coupling strengths and the switching functions to be the same for both detectors, although this is of course not necessary in the most general case. 

Initially, we consider the detectors and the field to be uncorrelated and in their ground states,
\begin{equation}
\rhoh_0 = \proj{g_\textsc{a}}{g_\textsc{a}} \otimes \proj{g_{\textsc{b}}}{g_{\textsc{b}}} \otimes \proj{0}{0}.
\end{equation}
Perturbation theory allows us to write the joint state of the two detectors after the interaction as
\begin{equation}
\rhoh_\textsc{ab} = \rhoh_{\textsc{ab},0} + \rhoh^{(2)}_{\textsc{ab}} + \mathcal{O}(\lambda^4),
\end{equation}
where $\rhoh_{\textsc{ab},0}=\proj{g_\textsc{a}}{g_\textsc{a}} \otimes \proj{g_\textsc{b}}{g_\textsc{b}}$ and, as in the single detector case, the terms of odd order are zero because the odd-point functions of the vacuum state of the field are zero. In the ordered basis \mbox{$\{\ket{g_\textsc{a}g_{\textsc{b}}},\ket{g_{\textsc{a}}e_{\textsc{b}}},\ket{e_{\textsc{a}}g_\textsc{b}},\ket{e_\textsc{a}e_\textsc{b}}\}$}, the second order contribution can be written in matrix form as (see, e.g.,~\cite{Pozas2015,NegativityCovariance})
\begin{equation}\label{Eq: rho 2 two detectors}
\rhoh_{\textsc{ab}}^{(2)} = \begin{pmatrix}
-\mathcal{L}_{\textsc{aa}} - \mathcal{L}_\textsc{bb} & 0 & 0 & \mathcal{M}^* \\
0 & \mathcal{L}_\textsc{bb} & \mathcal{L}_\textsc{ab}^* & 0 \\
0 & \mathcal{L}_\textsc{ab} & \mathcal{L}_\textsc{aa} & 0 \\
\mathcal{M} & 0 & 0 & 0 \\
\end{pmatrix}.
\end{equation}
Now, let us restrict ourselves to the case in which the interaction of the two detectors happens in regions of spacetime that are spacelike separated, i.e., when
\begin{equation}
T < D - 2R,
\end{equation}
where $D = |\bm x_\textsc{a} - \bm x_\textsc{b}|$ denotes the spatial distance between the two detectors. Then, we get the following expressions for the terms in Eq.~\eqref{Eq: rho 2 two detectors}:
\begin{align}
& \mathcal{L}_{\textsc{aa}} = \mathcal{L}_\textsc{bb} = \frac{\lambda^2}{4 \pi^2} \!\int_0^\infty \!\!\!\diff |\bm k| \, |\bm k| \,|\tilde{\xi}(|\bm k|+\Omega)|^2 |\tilde{F}(|\bm k|)|^2,  \label{Eq: Lii}\\
& \mathcal{L}_{\textsc{ab}} = \!\frac{\lambda^2}{4\pi^2 D}\!\!\int_0^\infty\!\!\!\! \diff |\bm k| \sin(|\bm k| D) |\tilde{\xi}(|\bm k|+\Omega)|^2 |\tilde{F}(|\bm k|)|^2, \!\! \label{Eq: Lab}\\
& \mathcal{M} = -\frac{\lambda^2}{4 \pi^2 D} \int_0^\infty \diff |\bm k| \sin(|\bm k|D) \tilde{\xi}(|\bm k|-\Omega) \nonumber \\
& \phantom{==============} \times \tilde{\xi}(|\bm k| + \Omega)^* \, |\tilde{F}(|\bm k|)|^2, \label{Eq: M}
\end{align}
where $\tilde{\xi}$ and $\tilde{F}$ are the Fourier transforms of the switching function $\xi$ and 
\begin{equation}
F(\bm x) = \frac{3}{4\pi R^3}\,\theta(R-|\bm x|),
\end{equation}
respectively. Notice that to arrive at Eqs.~\eqref{Eq: Lii}--\eqref{Eq: M} we used that
\begin{equation}
F(\bm x) = F_\textsc{a}(\bm x + \bm x_\textsc{a}) = F_{\textsc{b}}(\bm x+\bm x_\textsc{b}),
\end{equation}
as well as the spherical symmetry of $F$, which implies
\begin{equation}
\tilde{F}(\bm k) = e^{\ii \bm k \cdot \bm x_\textsc{a}} \tilde{F}_\textsc{a}(\bm k)= e^{\ii \bm k \cdot \bm x_\textsc{b}} \tilde{F}_\textsc{b}(\bm k) = \tilde{F}(|\bm k|),
\end{equation}
where
\begin{equation}
\tilde{F}(|\bm k|) = \frac{3}{(|\bm k|R)^3} \big[ \sin(|\bm k|R) - |\bm k| R\cos(|\bm k|R) \big].
\end{equation}
Finally, we choose the switching function for this example to be of the Heaviside type, 
\begin{equation}
\xi(t) = \textsc{I}_{(0,T)}(t),
\end{equation}
with Fourier transform
\begin{equation}
\tilde{\xi}(k) = \frac{2 e^{-\ii k T /2}}{k} \sin\!\bigg( \frac{k T}{2} \bigg).
\end{equation}
Eqs.~\eqref{Eq: Lii}--\eqref{Eq: M} can then be written in terms of dimensionless parameters as
\begin{align}
& \mathcal{L}_{\textsc{aa}} = \mathcal{L}_\textsc{bb} = \frac{\lambda^2}{\pi^2} \int_0^\infty \!\!\diff \kappa \, \frac{\sin^2\big[ (\kappa+\gamma)/2 \big]}{\kappa} |\tilde{F}(\kappa)|^2,  \label{Eq: Lii spec}\\
& \mathcal{L}_{\textsc{ab}} = \frac{\lambda^2}{\pi^2 d} \int_0^\infty \!\!\diff \kappa \, \sin(\kappa d) \, \frac{\sin^2\big[ (\kappa+\gamma)/2 \big]}{\kappa^2} |\tilde{F}(\kappa)|^2,  \label{Eq: Lab spec}\\
& \mathcal{M} = -\frac{\lambda^2 e^{\ii\gamma}}{2 \pi^2 d} \int_0^\infty \diff \kappa \, \frac{\sin(\kappa d)(\cos\gamma - \cos\kappa)}{\kappa^2 - \gamma^2} \, |\tilde{F}(\kappa)|^2, \label{Eq: M spec} 
\end{align}
where $\gamma=\Omega T$, $d=D/T$, $\kappa =  |\bm k| T$, and
\begin{equation}
\tilde{F}(\kappa) = \frac{3}{(\kappa r)^3} \big[ \sin(\kappa r) - \kappa r \cos(\kappa r) \big],
\end{equation}
with $r=R/T$. 

As a particular application of the approximation result of Sec.~\ref{Subsection: Approximation results multiple detectors}, we know that Eqs.~\eqref{Eq: Lii spec}--\eqref{Eq: M spec} can be approximated by their counterparts in the scenario where the detectors couple to the field through the train of sudden interactions given by
\begin{equation}
\chi_\xi(t; N) = \frac{T}{N} \sum_{j=1}^{N} \delta\bigg(t-\frac{j-1/2}{N}T\bigg).
\end{equation}
The terms that approximate Eqs.~\eqref{Eq: Lii spec}--\eqref{Eq: M spec} can then be written as
\begin{align}
& \!\!\!\mathcal{L}_\textsc{aa}(N) = \mathcal{L}_\textsc{bb}(N) = \frac{\lambda^2}{N^2}\!\! \sum_{j,j'=1}^N \frac{1}{4 \pi^2} \int_{0}^\infty \diff \kappa\, \kappa\, |\tilde{F}(\kappa)|^2 \nonumber \\
& \phantom{==============\,}\times \cos\bigg[\frac{(\kappa+\gamma)(j-j')}{N}\bigg] ,\label{Eq: Lii approximated}\\
& \!\!\!\mathcal{L}_\textsc{ab}(N) = \frac{\lambda^2}{N^2}\!\! \sum_{j,j'=1}^N \frac{1}{4\pi^2 d} \int_0^\infty\diff \kappa\,\sin(\kappa d) \, |\tilde{F}(\kappa)|^2 \nonumber \\
& \phantom{==============\,}\times \cos\bigg[\frac{(\kappa+\gamma)(j-j')}{N}\bigg] ,\label{Eq: Lab approximated} \\
& \!\!\!\mathcal{M}(N) = - \frac{\lambda^2}{N^2} \!\!\sum_{j,j'=1}^N \frac{e^{\ii\gamma(j+j')/T}}{4 \pi^2 d} \int_0^\infty \diff \kappa\, \sin(\kappa d) \,|\tilde{F}(\kappa)|^2 \nonumber \\
& \phantom{=================\,\,}\times \cos\bigg[\frac{\kappa (j-j')}{N}\bigg]. \label{Eq: M approximated}
\end{align}

In Figs.~\ref{Fig: 2d Lii},~\ref{Fig: 2d Lab}, and~\ref{Fig: 2d M}, we see how, for the case \mbox{$\gamma=1$}, \mbox{$d=1.2$}, and \mbox{$r=0.1$}, the approximated values given in Eqs.~\eqref{Eq: Lii approximated}--\eqref{Eq: M approximated} converge to the exact values given in Eqs.~\eqref{Eq: Lii spec}--\eqref{Eq: M spec}. In particular, Figs.~\ref{Fig: 2d Lii}b,~\ref{Fig: 2d Lab}b, and~\ref{Fig: 2d M}d show that the errors committed by the approximations are upper bounded by functions decaying at least as fast as $1/N$, in agreement with the approximation result of Sec.~\ref{Subsection: Approximation results multiple detectors}.

\begin{figure*}
\begin{center}
\includegraphics[scale=0.82]{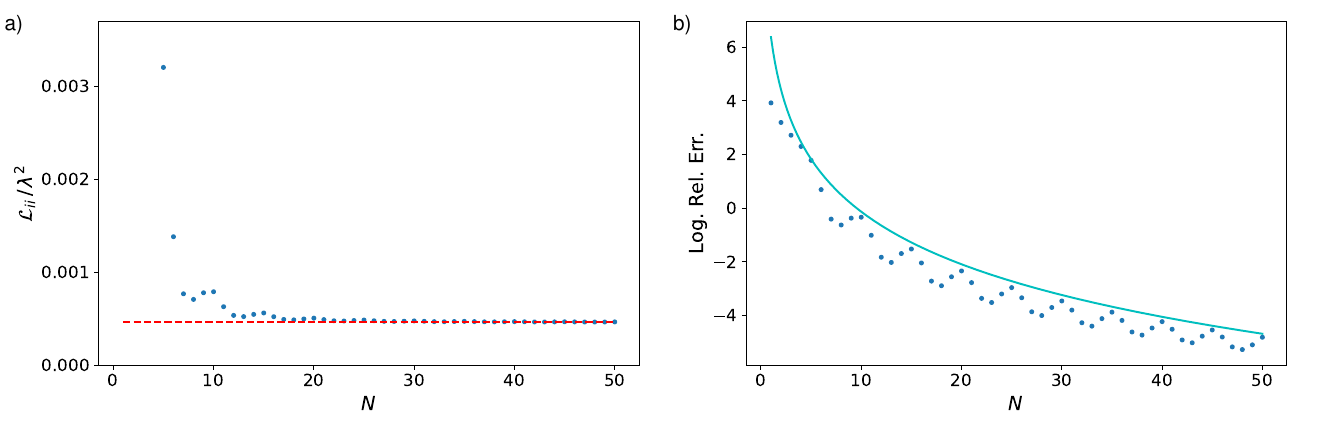}
\caption{a) $\mathcal{L}_{\textsc{aa}}=\mathcal{L}_{\textsc{bb}} \equiv\mathcal{L}_{ii}$ term for a pair of identical two-level UDW detectors coupled to a quantum scalar field in spacelike separated regions through a train of delta couplings approximating a Heaviside switching, as a function of the number of couplings $N$. Here we set the relevant parameters to $\gamma=1$, $d=1.2$, and $r=0.1$. The constant red dashed line marks the value of the $\mathcal{L}_{ii}$ term for the exact Heaviside switching. b) Logarithmic relative error of the $\mathcal{L}_{ii}$ term for the Heaviside switching approximated by delta couplings, as a function of the number of couplings $N$. The function represented by the solid cyan line is proportional to $1/N^{2.8}$, upper bounding the rate of convergence of the approximation.}
\label{Fig: 2d Lii}
\end{center}
\end{figure*}

\begin{figure*}
\begin{center}
\includegraphics[scale=0.82]{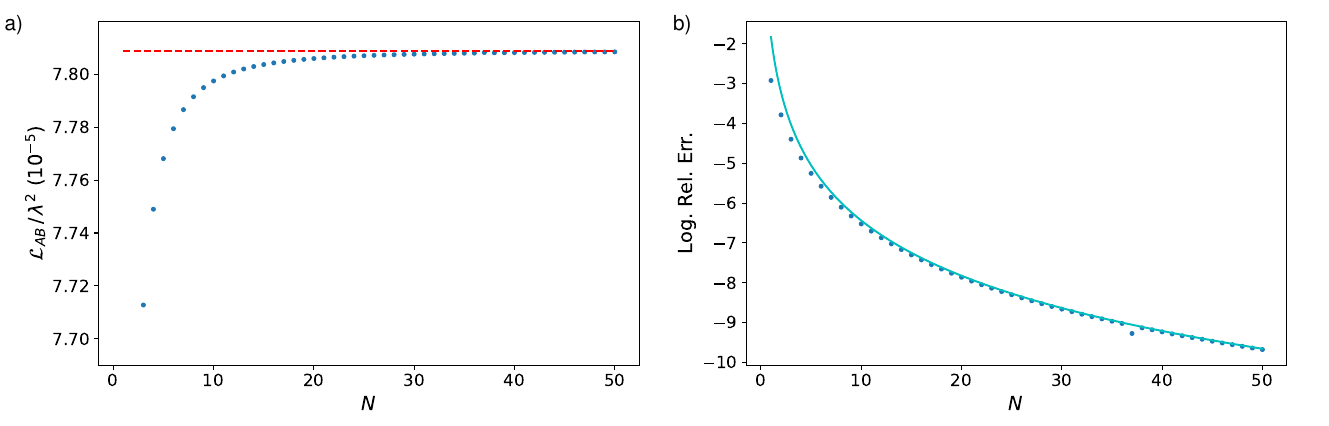}
\caption{a) $\mathcal{L}_\textsc{ab}$ term for a pair of identical two-level UDW detectors coupled to a quantum scalar field in spacelike separated regions through a train of delta couplings approximating a Heaviside switching, as a function of the number of couplings $N$. Here we set the relevant parameters to $\gamma=1$, $d=1.2$, and $r=0.1$. The constant red dashed line marks the value of the $\mathcal{L}_\textsc{ab}$ term for the exact Heaviside switching. b) Logarithmic relative error of the $\mathcal{L}_\textsc{ab}$ term for the Heaviside switching approximated by delta couplings, as a function of the number of couplings $N$. The function represented by the solid cyan line is proportional to $1/N^{2}$, marking the rate of convergence of the approximation.}
\label{Fig: 2d Lab}
\end{center}
\end{figure*}

\begin{figure*}
\begin{center}
\includegraphics[scale=0.82]{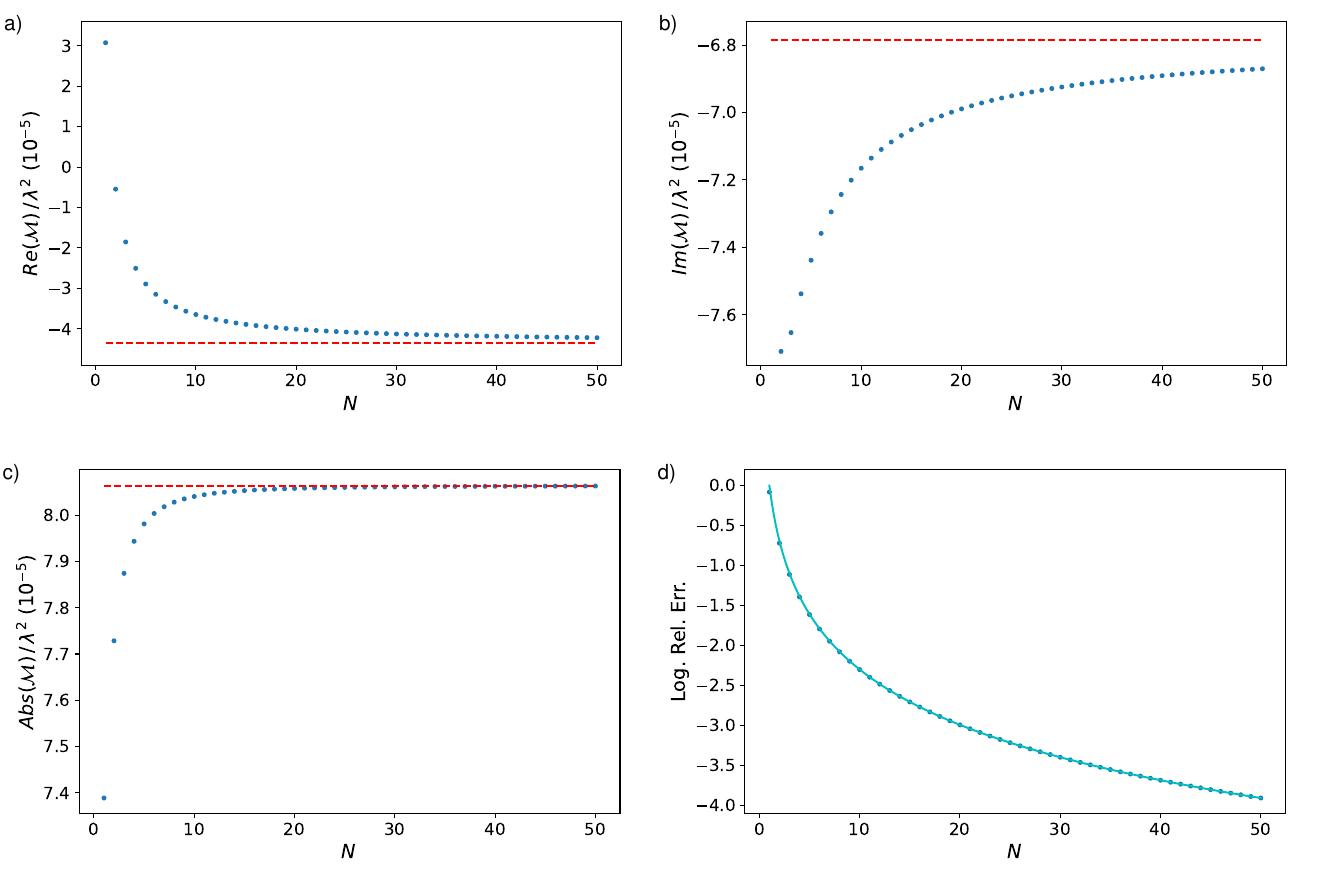}
\caption{a), b), and c) Respectively, real part, imaginary part, and absolute value of the $\mathcal{M}$ term for a pair of identical two-level UDW detectors coupled to a quantum scalar field in spacelike separated regions through a train of delta couplings approximating a Heaviside switching, as a function of the number of couplings $N$. Here we set the relevant parameters to $\gamma=1$, $d=1.2$, and $r=0.1$. The constant red dashed lines mark the values of the real part, imaginary part, and absolute value of the $\mathcal{M}$ term for the exact Heaviside switching. d) Logarithmic relative error of the $\mathcal{M}$ term for the Heaviside switching approximated by delta couplings, as a function of the number of couplings $N$. The function represented by the solid cyan line is proportional to $1/N$, marking the rate of convergence of the approximation.}
\label{Fig: 2d M}
\end{center}
\end{figure*}

\vspace{0.5cm}

\section{Conclusion}\label{Section: Conclusion}

We have shown how to efficiently approximate non-perturbatively the time evolution of particle detectors interacting with a quantum field for very general compactly supported switching functions and arbitrary smearing functions.

Specifically, we have described how to approximate the detectors' dynamics when they couple through compactly supported bounded switching functions, continuous except for maybe a finite number of points, with that of a detector coupled through a sequence of delta couplings, which can be evaluated non-perturbatively. The approximation converges at least as fast as $1/N$ (where $N$ is the number of delta pulses), and often much faster than this for regular enough switchings. 

The fast convergence of the approximation is guaranteed at all orders in perturbation theory as long as the detectors' switching and smearing functions and the field's state are `regular enough'. This, in particular, includes the cases in which the state of the field is Hadamard, and the switching and smearing functions are smooth except for maybe a finite number of points, which covers most of the relevant scenarios in both flat and curved spacetimes.

Since this approximation scheme can be easily (and efficiently) evaluated in current computers, we expect the method presented here to pave the way for a number of non-perturbative analyses of phenomena involving particle detectors and measurements in quantum field theory.

\begin{acknowledgments}	

The authors would like to thank T. Rick Perche, Bruno de S. L. Torres, and Sergi Nadal for their comments and insights. The authors would also like to acknowledge Jos\'e de Ram\'on and Jose M. S\'anchez Vel\'azquez, who were part of some early discussions related to this project. JPG acknowledges the support of a Mike and Ophelia Lazaridis Fellowship, as well as the support of a fellowship from ``La Caixa'' Foundation (ID 100010434, code LCF/BQ/AA20/11820043). EMM acknowledges support through the Discovery Grant Program of the Natural Sciences and Engineering Research Council of Canada (NSERC). EMM also acknowledges support of his Ontario Early Researcher award. Research at Perimeter Institute is supported in part by the Government of Canada through the Department of Innovation, Science and Industry Canada and by the Province of Ontario through the Ministry of Colleges and Universities.		
\end{acknowledgments}

\onecolumngrid
\appendix

\section{Proof of the approximation result}\label{Appendix: Proof of the approximation result}

In this Appendix, we prove the approximation results stated in Eqs.~\eqref{Eq: Approximation result order by order}--\eqref{Eq: Approximation result} for a single detector, and in Eqs.~\eqref{Eq: Approximation result order by order multiple bis}--\eqref{Eq: Approximation result multiple} for multiple detectors. The claim establishes that each term of the Dyson series of the time-evolved density matrix resulting from couplings with trains of sudden interactions converges (as we increase the number of sudden interactions) to the contribution of the same order for the density matrix resulting from couplings with (sufficiently regular) switching functions that the delta interactions aim to approximate, both for setups involving one or more particle detectors.  

%The claim establishes that the $k$-th order term of the Dyson series of the time-evolved density matrix resulting from the coupling with a train of sudden interactions converges, as we increase the number of sudden interactions, to the corresponding $k$-th order contribution of the density matrix resulting from the coupling with a sufficiently regular switching function that the delta interactions aim to approximate.

%the convergence of the sequence of density matrices resulting from the coupling through an increasing number of sudden interactions to the density matrix resulting from the coupling through a smooth switching function that the delta interactions approximate.

Let us first consider a one detector scenario with the interaction Hamiltonian weight given by Eq.~\eqref{Eq: interaction Hamiltonian}, i.e., 
\begin{equation}
\hat h_\text{int}(\tau, \bm z) = \lambda \chi(\tau) \big[ F(\bm z) \,\muh_\alpha^\dagger(\tau)\,\geno^\alpha(\tau,\bm z) + \text{H.c.}\big],
\end{equation}
where the switching function $\chi$ can be either the regular switching, $\xi$, or its approximation with $N$ sudden interactions, $\chi_\xi(N)$, given in Eq.~\eqref{Eq: chi_xi}. For the regular scenario, the joint evolution of the field and the detector is given by the unitary 
\begin{equation}\label{Eq: evolution operator continuous App}
\hat U_\xi = \mathcal{T}_\tau \exp\bigg( -\ii\int\diff\maf z\, \hat h_\text{int}^\xi(\maf z)   \bigg) = \mathcal{T}_\tau \exp\bigg\{ -\ii\lambda\int\diff\tau\,\xi(\tau) \bigg[ \muh_\alpha^\dagger(\tau) \int\diff\bm z\, \sqrt{-g}\,F(\bm z) \,\geno^\alpha(\tau,\bm z) + \text{H.c.} \bigg] \bigg\},
\end{equation}
with the final state of the detector being
\begin{equation}
\rhoh_\textsc{d}^\xi = \Tr_\phi\Big(\hat U_\xi \rhoh_0 \hat U_\xi^\dagger \Big).
\end{equation}
Expanding in series of $\lambda$, we get
\begin{equation}
\hat U_\xi = \openone + \sum_{k=1}^{\infty} \hat U_\xi^{(k)} \quad \text{and} \quad \rhoh_\textsc{d}^\xi =  \rhoh_\textsc{d,0} + \sum_{k=1}^\infty \rhoh_\textsc{d}^{(k)\xi},
\end{equation}
where $\hat U_\xi^{(k)}$ and $\rhoh_\textsc{d}^{(k)\xi}$ are proportional to $\lambda^k$, and
\begin{equation}\label{Eq: k-th order continuous rho}
\rhoh_\textsc{d}^{(k)\xi} = \sum_{r+s=k} \Tr_\phi\Big( \hat U_\xi^{(r)}\rhoh_0\hat U_\xi^{(s)\dagger}  \Big).
\end{equation}
Now,
\begin{align}\label{Eq: perturbative piece cont}
\Tr_\phi\Big( \hat U_\xi^{(r)}\rhoh_0\hat U_\xi^{(s)\dagger} \Big) & = (-\ii)^r\ii^s \lambda^k \int\diff\tau_1\hdots\diff\tau_r\diff\tau'_1\hdots\diff\tau'_s\,\prod_{p=1}^{r-1}\theta(\tau_{p+1}-\tau_p) \prod_{q=1}^{s-1}\theta(\tau'_{q+1}-\tau'_q)  \\
& \phantom{=\;} \times \prod_{l=1}^{r}\xi(\tau_l)\,\prod_{m=1}^{s}\xi(\tau'_m) \,\Big[W_{\bm\alpha'}^{\bm\alpha}(\tau'_1,\hdots,\tau'_s;\tau_1,\hdots,\tau_r; F) \, \muh_{\alpha_r}^\dagger(\tau_r)\hdots\muh_{\alpha_1}^\dagger(\tau_1)\rhoh_\textsc{d,0}\muh^{\alpha'_1}(\tau'_1)\hdots\muh^{\alpha'_s}(\tau'_s) \nonumber \\[-0.2cm]
&\phantom{==============} + W^{\bm\alpha'}_{\bm\alpha}\!(\tau'_1,\hdots,\tau'_s;\tau_1,\hdots,\tau_r) \,\muh^{\alpha_r}(\tau_r)\hdots\muh^{\alpha_1}(\tau_1)\rhoh_\textsc{d,0}\muh_{\alpha'_1}^\dagger(\tau'_1)\hdots\muh_{\alpha'_s}^\dagger(\tau'_s) \nonumber \\[0.05cm]
& \phantom{==============} + W_{\bm\alpha,\bm\alpha'}(\tau'_1,\hdots,\tau'_s;\tau_1,\hdots,\tau_r)\, \muh^{\alpha_r}(\tau_r)\hdots\muh^{\alpha_1}(\tau_1)\rhoh_\textsc{d,0}\muh^{\alpha'_1}(\tau'_1)\hdots\muh^{\alpha'_s}(\tau'_s) \nonumber \\[0.05cm]
& \phantom{==============} + W^{\bm\alpha,\bm\alpha'}\!(\tau'_1,\hdots,\tau'_s;\tau_1,\hdots,\tau_r) \,\muh_{\alpha_r}^\dagger(\tau_r)\hdots\muh_{\alpha_1}^\dagger(\tau_1)\rhoh_\textsc{d,0}\muh_{\alpha'_1}^\dagger(\tau'_1)\hdots\muh_{\alpha'_s}^\dagger(\tau'_s) \Big], \nonumber
\end{align}
where we denoted
\begin{align}\label{Eq: W_k def}
W_{\bm\alpha'}^{\bm\alpha}(t'_1,\hdots,t'_s;t_1,\hdots,t_r; F) & \coloneqq \Tr_\phi\big[\rhoh_\phi\,\geno_{\alpha'_1}^\dagger(t'_1,F)\hdots\geno_{\alpha'_s}^\dagger(t'_s,F)\geno^{\alpha_r}(t_r,F)\hdots\geno^{\alpha_1}(t_1,F)\big], \\
W_{\bm\alpha}^{\bm\alpha'}\!(t'_1,\hdots,t'_s;t_1,\hdots,t_r; F) & \coloneqq \Tr_\phi\big[\rhoh_\phi\,\geno^{\alpha'_1}(t'_1,F)\hdots\geno^{\alpha'_s}(t'_s,F)\geno_{\alpha_r}^\dagger(t_r,F)\hdots\geno_{\alpha_1}^\dagger(t_1,F)\big], \\
W_{\bm\alpha,\bm\alpha'}(t'_1,\hdots,t'_s;t_1,\hdots,t_r; F) & \coloneqq \Tr_\phi\big[\rhoh_\phi\,\geno_{\alpha'_1}^\dagger(t'_1,F)\hdots\geno_{\alpha'_s}^\dagger(t'_s,F)\geno_{\alpha_r}^\dagger(t_r,F)\hdots\geno_{\alpha_1}^\dagger(t_1,F)\big], \\
W^{\bm\alpha,\bm\alpha'}\!(t'_1,\hdots,t'_s;t_1,\hdots,t_r; F) & \coloneqq \Tr_\phi\big[\rhoh_\phi\,\geno^{\alpha'_1}(t'_1,F)\hdots\geno^{\alpha'_s}(t'_s,F)\geno^{\alpha_r}(t_r,F)\hdots\geno^{\alpha_1}(t_1,F)\big],
\end{align}
and
\begin{equation}
\geno^\alpha(t,F) \coloneqq \int\diff\bm z\,\sqrt{-g}\, F(\bm z) \,\geno^\alpha(t,\bm z).
\end{equation}
We can rewrite Eq.~\eqref{Eq: perturbative piece cont} as
\begin{equation}\label{Eq: perturbative piece cont rewritten}
\Tr_\phi\Big(  \hat U_\xi^{(r)} \rhoh_0\hat U_\xi^{(s)\dagger} \Big) = (-\ii)^r \ii^s \lambda^k \int_{\Delta_r\times\Delta'_s}\diff\bm\tau\,\diff\bm\tau'\, \Xi(\bm\tau;\bm\tau')\, \hat \omega(\bm\tau;\bm\tau';F).
\end{equation}
Here,
\begin{align}
\Delta_r & \coloneqq \{\bm\tau =(\tau_1,\hdots,\tau_r)\,:\,\tau_1\in[0,T],\,\tau_i\in[0,\tau_{i-1}],\,\forall\,i=2,\hdots,r\}, \\
\Delta'_s & \coloneqq \{\bm\tau' \!=(\tau'_1,\hdots,\tau'_s)\,:\,\tau'_1\in[0,T],\,\tau_i\in[0,\tau_{i-1}],\,\forall\,i=2,\hdots,s\},
\end{align}
are triangular domains; the function $\Xi$ is an abbreviation,
\begin{equation}\label{Eq: Def Xi}
\Xi(\bm\tau;\bm\tau') \coloneqq \prod_{l=1}^{r}\xi(\tau_l)\,\prod_{m=1}^{s}\xi(\tau'_m),
\end{equation}
and $\hat\omega$ is the operator corresponding to the expression in square brackets of Eq.~\eqref{Eq: perturbative piece cont}. 
\begin{comment}
and $W_{\bm\alpha'}^{\bm\alpha}(\bm \tau',\bar{\bm\tau};F)$ is shorthand notation for the function defined in Eq.~\eqref{Eq: W_k def}, with the bar over $\bm\tau$ denoting that the order of its components should be reversed. Finally, 
\begin{equation}
\hat M^{\bm\alpha'}_{\bm\alpha}\!(\bm\tau,\bm\tau') \coloneqq \muh_{\alpha_r}^\dagger(\tau_r)\hdots\muh_{\alpha_1}^\dagger(\tau_1)\rhoh_\textsc{d,0}\muh^{\alpha'_1}(\tau'_1)\hdots\muh^{\alpha'_s}(\tau'_s).
\end{equation}
A simple calculation shows that if we write $\hat M$ as a matrix in the ordered basis $\{\ket{g},\ket{e}\}$, then its matrix elements are (their positions depending on the parity of $r$ and $s$)
\begin{equation}
\big\{ \alpha \, e^{-\ii\Omega(\varsigma-\varsigma')}, \beta \, e^{\ii\Omega(\varsigma-\varsigma')}, \gamma\,e^{-\ii\Omega(\varsigma+\varsigma')}, \gamma^*\,e^{\ii\Omega(\varsigma+\varsigma')}  \big\},
\end{equation}
where we denoted
\begin{equation}
\rhoh_\textsc{d,0} = \begin{pmatrix}
\alpha & \gamma \\
\gamma^* & \beta
\end{pmatrix},
\end{equation}
and
\begin{equation}
\varsigma = \sum_{l=1}^r (-1)^l \,\tau_l \quad \text{and} \quad \varsigma'=\sum_{m=1}^s (-1)^m\,\tau'_m.
\end{equation}
In particular, notice that all the matrix elements of $\hat M$ are smooth functions of $\bm\tau$ and $\bm\tau'$, since this will be relevant later on.
\end{comment}

On the other hand, if we consider the evolution produced by the coupling through the train of sudden interactions described by $\chi_\xi(N)$, from Eqs.~\eqref{Eq: evolution operator} and~\eqref{Eq: chi_xi} we have that the joint field-detector time-evolution is given by
\begin{equation}
\hat U(N) = \mathcal{T}_\tau \exp\bigg\{ -\ii\lambda \frac{T}{N}\sum_{j=1}^N \xi\bigg( 
\frac{j-1/2}{N}T \bigg)  \bigg[ \muh_\alpha^\dagger\bigg( \frac{j-1/2}{N}T \bigg)\int\diff\bm z\,\sqrt{-g_j}\, F(\bm z) \geno^\alpha\bigg(\frac{j-1/2}{N},\bm z \bigg) + \text{H.c.}\bigg] \bigg\},
\end{equation}
with the final state of the detector thus being
\begin{equation}
\rhoh_\textsc{d}(N) = \Tr_\phi\Big( \hat U(N)\, \rhoh_0 \,\hat U(N)^\dagger \Big).
\end{equation}
Under the same conditions as before, we write
\begin{equation}
\hat U = \openone + \sum_{k=1}^\infty \hat U^{(k)}(N) \quad \text{and} \quad \rhoh_\textsc{d}(N) = \rhoh_\textsc{d,0} + \sum_{k=1}^\infty \rhoh_\textsc{d}^{(k)}(N),
\end{equation}
where again $\hat U^{(k)}(N)$ and $\rhoh_\textsc{d}^{(k)}(N)$ are proportional to $\lambda^k$, and
\begin{equation}\label{Eq: k-th order deltas rho}
\rhoh_\textsc{d}^{(k)}(N) = \sum_{r+s=k} \Tr_\phi\Big( \hat U^{(r)}(N)\,\rhoh_0\, \hat U^{(s)}(N)^{\dagger} \Big).
\end{equation}
In this case, for $r+s=k$, we have
\begin{align}\label{Eq: perturbative piece deltas}
\Tr_\phi\Big( \hat U^{(r)}(N)\, \rhoh_0 \,\hat U^{(s)}(N)^{\dagger} \Big) & = (-\ii)^r \ii^s \frac{\lambda^k T^k}{N^k} \sum_{\bm j \in \mathcal{J}_r}\sum_{\bm j' \in \mathcal{J}'_s} \prod_{n=1}^r \xi\bigg( \frac{j_n-1/2}{N} T\bigg) \prod_{m=1}^s \xi\bigg( \frac{j'_m - 1/2}{N}T \bigg) \\
& \times \bigg[ W_{\bm\alpha'}^{\bm\alpha}\bigg( \frac{j'_1 - 1/2}{N}T,\hdots,\frac{j'_s-1/2}{N}T;\frac{j_1-1/2}{N}T,\hdots,\frac{j_r-1/2}{N}T; F  \bigg) \nonumber \\
&\phantom{==}\times\muh_{\alpha_r}^\dagger\bigg( \frac{j_r-1/2}{N}T \bigg)\hdots\muh_{\alpha_1}^\dagger\bigg( \frac{j_1-1/2}{N}T \bigg)\,\rhoh_{\textsc{d,0}} \,\muh^{\alpha'_1}\bigg( \frac{j'_1-1/2}{N}T \bigg)\hdots \muh^{\alpha'_s}\bigg( \frac{j'_s-1/2}{N}T \bigg) \nonumber \\
&\phantom{==} + W_{\bm\alpha}^{\bm\alpha'}\bigg( \frac{j'_1 - 1/2}{N}T,\hdots,\frac{j'_s-1/2}{N}T;\frac{j_1-1/2}{N}T,\hdots,\frac{j_r-1/2}{N}T; F  \bigg) \nonumber \\
&\phantom{==}\times\muh^{\alpha_r}\bigg( \frac{j_r-1/2}{N}T \bigg)\hdots\muh^{\alpha_1}\bigg( \frac{j_1-1/2}{N}T \bigg)\,\rhoh_{\textsc{d,0}} \,\muh_{\alpha'_1}^\dagger\bigg( \frac{j'_1-1/2}{N}T \bigg)\hdots \muh_{\alpha'_s}^\dagger\bigg( \frac{j'_s-1/2}{N}T \bigg) \nonumber \\
&\phantom{==} + W_{\bm\alpha,\bm\alpha'}\bigg( \frac{j'_1 - 1/2}{N}T,\hdots,\frac{j'_s-1/2}{N}T;\frac{j_1-1/2}{N}T,\hdots,\frac{j_r-1/2}{N}T; F  \bigg) \nonumber \\
&\phantom{==}\times\muh^{\alpha_r}\bigg( \frac{j_r-1/2}{N}T \bigg)\hdots\muh^{\alpha_1}\bigg( \frac{j_1-1/2}{N}T \bigg)\,\rhoh_{\textsc{d,0}} \,\muh^{\alpha'_1}\bigg( \frac{j'_1-1/2}{N}T \bigg)\hdots \muh^{\alpha'_s}\bigg( \frac{j'_s-1/2}{N}T \bigg) \nonumber \\
&\phantom{==} + W^{\bm\alpha,\bm\alpha'}\!\bigg( \frac{j'_1 - 1/2}{N}T,\hdots,\frac{j'_s-1/2}{N}T;\frac{j_1-1/2}{N}T,\hdots,\frac{j_r-1/2}{N}T; F  \bigg) \nonumber \\
&\phantom{==}\times\muh_{\alpha_r}^\dagger\bigg( \frac{j_r-1/2}{N}T \bigg)\hdots\muh_{\alpha_1}^\dagger\bigg( \frac{j_1-1/2}{N}T \bigg)\,\rhoh_{\textsc{d,0}} \,\muh_{\alpha'_1}^\dagger\bigg( \frac{j'_1-1/2}{N}T \bigg)\hdots \muh_{\alpha'_s}^\dagger\bigg( \frac{j'_s-1/2}{N}T \bigg) \bigg], \nonumber
\end{align}
where we have used the same notation as in Eq.~\eqref{Eq: perturbative piece cont}, and
\begin{align}
\mathcal{J}_r & \coloneqq \{ \bm j = (j_1,\hdots,j_r) \,:\, j_1 \in \{1,\hdots,N\},\, j_i \in \{1,\hdots,j_{i-1}\},\,\forall\,i=2,\hdots,r  \},  \\
\mathcal{J}'_s & \coloneqq \{ \bm j' = (j'_1,\hdots,j'_s) \,:\, j'_1 \in \{1,\hdots,N\},\, j'_i \in \{1,\hdots,j'_{i-1}\},\,\forall\,i=2,\hdots,s  \}.
\end{align}
Using the notation employed in Eq.~\eqref{Eq: perturbative piece cont rewritten}, we can rewrite Eq.~\eqref{Eq: perturbative piece deltas} as
\begin{equation}\label{Eq: perturbative piece deltas rewritten}
\Tr_\phi\Big(  \hat U^{(r)}(N)\, \rhoh_0\,\hat U^{(s)}(N)^{\dagger} \Big) = (-\ii)^r \ii^s \frac{\lambda^k T^k}{N^k} \sum_{(\bm j,\bm j') \in \mathcal{J}_r \times \mathcal{J}'_s} \Xi(\bm\tau_{\bm{j}};\bm\tau'_{\bm{j'}}) \, \hat\omega(\bm\tau_{\bm j};\bm\tau'_{\bm{j}'};F),
\end{equation}
where
\begin{equation}
\bm\tau_{\bm j} \coloneqq \bigg( \frac{j_1-1/2}{N}T,\hdots,\frac{j_r-1/2}{N}T \bigg) \quad \text{and} \quad \bm\tau'_{\bm j'} \coloneqq \bigg( \frac{j'_1-1/2}{N}T,\hdots,\frac{j'_s-1/2}{N}T \bigg). 
\end{equation}

To prove the approximation result presented in Sec.~\ref{Section: Single detector}, we follow a generalization of the strategy used by C. K. Chui in~\cite{Chui1969,Chui1971}. First, let us define $H_{ab}^{(r,s)}:\R{r}\times\R{s}\to\C{}$ as
\begin{equation}
H_{ab}^{(r,s)}(\bm \tau;\bm \tau') = (-\ii)^r \ii^s \Xi(\bm\tau;\bm\tau') \,\omega_{ab}(\bm\tau;\bm\tau';F) \, \textsc{I}_{\Delta_r \times \Delta'_s},
\end{equation}
where the dependence on $r$ and $s$ for $\Xi$ and $\hat\omega$ manifests itself in the vectors $\bm \tau$ and $\bm \tau'$ having $r$ and $s$ components, respectively. Here, $\omega_{ab}$ is the $(a,b)$-th matrix component of $\hat \omega$ in some basis of the detector Hilbert space $\hilb_\textsc{d}$, and $\textsc{I}_\textsc{x}$ is the indicator function for the set $\textsc{X}$, i.e., a function that equals 1 when the argument belongs to $\textsc{X}$, and 0 otherwise. With this definition, from Eq.~\eqref{Eq: perturbative piece cont rewritten} we have that
\begin{equation}
\bigg[\Tr_\phi\Big(  \hat U_\xi^{(r)} \rhoh_0\hat U_\xi^{(s)\dagger} \Big)\bigg]_{ab} = \lambda^k \int_{[0,T]^k} \diff\bm z\, H_{ab}^{(r,s)}(\bm z),
\end{equation}
where the subindices $ab$ again denote the $(a,b)$-th matrix element of the trace in the chosen basis of $\hilb_\textsc{d}$. It is worth remarking that the presence of the indicator function in the definition of $H_{ab}$ allowed us to have here the hypercube $[0,T]^k$ as integration domain, instead of the product of triangular domains of Eq.~\eqref{Eq: perturbative piece cont rewritten}. Meanwhile, from Eq.~\eqref{Eq: perturbative piece deltas rewritten} we also have that
\begin{equation}
\bigg[ \Tr_\phi\Big(  \hat U^{(r)}(N)\, \rhoh_0\,\hat U^{(s)}(N)^{\dagger} \Big) \bigg]_{ab} = \lambda^k R_{N}^{(k)}\big[H_{ab}^{(r,s)}\big],
\end{equation}
where $R_N^{(k)}$ is the Riemann sum associated with the uniform partition of $[0,T]^{k}$ into $N^{k}$ identical cubes with tags placed in their centres, i.e., for an arbitrary $h:\R{k}\to\C{}$,
\begin{equation}
R_{N}^{(k)}[h] \coloneqq \sum_{\bm j} h\bigg( \frac{j_1-1/2}{N}T,\hdots,\frac{j_k-1/2}{N}T  \bigg) \cdot \frac{T^k}{N^k},
\end{equation}
with each component of $\bm j$ running over $\{1,\hdots,N\}$. Let us now define $v_N^{(k)} : \R{k} \to \C{}$ as
\begin{equation}
v_{N}^{(k)}(\bm z) = \sum_{\bm j}T^k \prod_{i=1}^{k}\textsc{I}_{\big[\frac{j_i-1/2}{N},1\big]}\bigg( \frac{z_i}{T} \bigg) - N^k \prod_{i=1}^k z_i,
\end{equation}
where again each component of $\bm j$ runs over $\{1,\hdots,N\}$. With this definition, we can write
\begin{equation}
R_N^{(k)}\big[H_{ab}^{(r,s)}\big] - \int_{[0,T]^k}\diff\bm z\, H_{ab}^{(r,s)}(\bm z) = \frac{1}{N^k} \int_{[0,T]^k} \diff v_N^{(k)}\,H_{ab}^{(r,s)},
\end{equation}
where the right-hand side should be understood as a $k$-dimensional Riemann-Stieltjes integral of $H_{ab}^{(r,s)}$ over $[0,T]^k$ with respect to $v_N^{(k)}$, as defined in Appendix~\ref{Appendix: The n-dimensional Riemann-Stieltjes integral}. Now, $H_{ab}^{(r,s)}$ is a product of functions including $\Xi$ (defined in Eq.~\eqref{Eq: Def Xi}), which is zero whenever one of its arguments is $0$ or $T$. In particular, $H_{ab}^{(r,s)}$ is zero along the boundary of $[0,T]^k$, so by Lemma~\ref{Lemma: somewhat integration by parts} in Appendix~\ref{Appendix: The n-dimensional Riemann-Stieltjes integral}, 
\begin{equation}
R_N^{(k)}\big[H_{ab}^{(r,s)}\big] - \int_{[0,T]^k}\diff\bm z\, H_{ab}^{(r,s)}(\bm z) = \frac{(-1)^k}{N^k} \int_{[0,T]^k} \diff H_{ab}^{(r,s)}\,v_N^{(k)}.
\end{equation}
Since $v_N^{(k)}$ is bounded in $[0,T]^k$, with bound
\begin{equation}
\big\|v_N^{(k)}\big\|_\infty \leq \frac{k T^k}{2} N^{k-1},
\end{equation}
assuming $H_{ab}^{(r,s)}$ is of bounded variation, we can use Lemma~\ref{Lemma: bound for the integral} in Appendix~\ref{Appendix: The n-dimensional Riemann-Stieltjes integral} to conclude that
\begin{equation}\label{Eq: bound of error}
\bigg| R_N^{(k)}\big[H_{ab}^{(r,s)}\big] - \int_{[0,T]^k}\diff\bm z\, H_{ab}^{(r,s)}(\bm z) \bigg| \leq \frac{k \,\tilde{C}_{ab}^{(r,s)}}{2N} T^k \equiv \frac{C_{ab}^{(r,s)}}{N},
\end{equation}
where $\tilde{C}_{ab}^{(r,s)}$ is the variation of $H_{ab}^{(r,s)}$ in $[0,T]^k$.  If $H_{ab}^{(r,s)}$ is of bounded variation for all pairs $(r,s)$ such that $r+s=k$, from Eqs.~\eqref{Eq: k-th order continuous rho} and~\eqref{Eq: k-th order deltas rho} we get that
\begin{equation}
\big|[\rhoh_\textsc{d}^{(k)}(N)]_{ab} - [\rhoh_\textsc{d}^{(k)\xi}]_{ab}\big| \leq \lambda^k \sum_{k=r+s} \frac{C_{ab}^{(r,s)}}{N} = \lambda^k \frac{C_{ab}^{(k)}}{N},
\end{equation}
for some $C_{ab}^{(k)}$, which proves Eq.~\eqref{Eq: Approximation result order by order bis}. This also implies
\begin{equation}\label{Eq: approx result component by component}
\lim_{N\to\infty}[\rhoh_\textsc{d}^{(k)}(N)]_{ab} = [\rhoh_\textsc{d}^{(k)\xi}]_{ab},
\end{equation}
which proves Eq.~\eqref{Eq: Approximation result order by order}. In particular, if Eq.~\eqref{Eq: bound of error} holds at all orders, Eqs.~\eqref{Eq: Approximation result bis} and~\eqref{Eq: Approximation result} are also satisfied for all values of the coupling strength for which the Dyson series converges.

In order to arrive at the bound given in Eq.~\eqref{Eq: bound of error}, we needed to assume that $H_{ab}^{(r,s)}$ is of bounded variation as a function of its (time) arguments. Since at the end of the day we want to use this non-perturbative tool for physical scenarios, it is worth discussing how exotic this condition is. 

Notice first that $H_{ab}^{(r,s)}$ depends on $\Xi$, which is a product of the values of the switching functions at different times, and the matrix elements $\omega_{ab}$, which involve a linear combination of smeared $k$-point correlation functions of the field operator $\geno^\alpha$ and the corresponding matrix element of an operator that consists of products of $\muh$ with $\rhoh_{\textsc{d},0}$. If the dependence of $\muh$ on time in the interaction picture only comes from the free evolution of the detector, as it is often the case, then we can assume that it is a smooth function of time. That way, whether $H_{ab}^{(r,s)}$ is sufficiently regular for Eq.~\eqref{Eq: bound of error} to hold or not ends up depending on the switching function (through $\Xi$) and the state of the field (through the smeared $k$-point correlation functions for the operator $\geno$). By lemma~\ref{Lemma: bounded variation} in Appendix~\ref{Appendix: The n-dimensional Riemann-Stieltjes integral}, one sufficient (but not necessary) condition for $H_{ab}^{(r,s)}$ to be of bounded variation is that it is of class $\mathcal{C}^{k}$ in $[0,T]^k$. In particular, this can be straightforwardly extended to the case in which $H_{ab}^{(r,s)}$ is bounded, and $\mathcal{C}^k$ except for maybe a finite number of points in $[0,T]^k$. 

Now, the (continuous) switching functions typically employed in particle detector setups (e.g., Heaviside, Gaussian, and Lorentzian couplings) are smooth except for maybe a finite number of points. Therefore, when $\xi$ is one of these common switching functions, it would suffice that the field's $k$-point correlation functions be of class $\mathcal{C}^k$ in $[0,T]^k$ for $H_{ab}^{(r,s)}$ to be bounded and of class $\mathcal{C}^k$ except for maybe a finite number of points, for all $(r,s)$ such that $r+s=k$. 

Let us consider, for instance, the vacuum of a real scalar field in $(1+3)$-dimensional Minkowski spacetime. For this case, the odd correlation functions vanish, while the even correlation functions can all be expressed as sums of products of the two-point function
\begin{equation}
w_2(\maf x_1,\maf x_2) = \int \frac{\diff \bm k}{(2\pi)^3\, 2 |\bm k|} \, e^{\ii \maf k \cdot (\maf x_1 - \maf x_2)}.
\end{equation}
If we have a linear coupling with the field amplitude, as in the Unruh-DeWitt model~\cite{Unruh,DeWitt}, \mbox{$\geno=\phih$}, and the relevant $k$-point functions can be given in terms of the smeared two-point function  
\begin{align}
W(t;t';F) = \int \diff\bm x \,\diff\bm x'\,F(\bm x) F(\bm x') \,w_2(\maf x_1,\maf x_2) = \int \frac{\diff\bm k}{(2\pi)^n \, 2 |\bm k|}\, |\tilde{F}(\bm k)|^2 \, e^{-\ii|k|(t-t')},
\end{align}
for a given smearing function $F$. One very popular choice of smearing function in the literature is the Gaussian of variance $\sigma^2$,
\begin{equation}
G(\bm x) =\frac{1}{\sqrt{\pi^3}\sigma^3}e^{-\bm x^2/\sigma^2}.
\end{equation}
With a Gaussian smearing, a straightforward calculation using spherical coordinates yields
\begin{equation}
W(t;t';G) = \frac{1}{4\pi^2} \int_0^\infty \diff |\bm k| \, |\bm k| e^{-|\bm k|^2\sigma^2/2 - \ii |\bm k| (t-t')} = \frac{1 - \sqrt{\frac{\pi}{2 \sigma^2}}(t-t')\, e^{-(t-t')^2/2\sigma^2}\big[ \ii + \operatorname{erfi}\big(\frac{t-t'}{\sqrt{2}\sigma}  \big) \big]}{ \sigma^2},
\end{equation}
which is a smooth function of $t$ and $t'$. Another common choice is the hard sphere of radius $R>0$,
\begin{equation}
S(\bm x) = \theta(R-|\bm x|).
\end{equation}
Notice that its Fourier transform
\begin{equation}
\tilde{S}(\bm k) = \frac{4\pi}{k^3}\big[ \sin(kR) - kR\cos(kR) \big]
\end{equation}
converges to a constant when $k\to 0$, and behaves like $1/k^2$ as $k\to\infty$, so that
\begin{equation}
W(t;t';S) = 4 \int_0^\infty \diff |\bm k| \, \frac{\big[ \sin(|\bm k|R) - |\bm k|R\cos(|\bm k|R) \big]^2}{|\bm k|^5} \, e^{-\ii |\bm k| (t-t')}
\end{equation}
is convergent. It can be shown, moreover, that it is of bounded variation, even though in this case
\begin{equation}
\frac{\partial^2}{\partial t \partial t'}W(t;t';S) = 4 \int_0^\infty \diff |\bm k| \, \frac{\big[ \sin(|\bm k|R) - |\bm k|R\cos(|\bm k|R) \big]^2}{|\bm k|^3} \, e^{-\ii |\bm k| (t-t')}
\end{equation}
fails to converge and therefore we cannot use Lemma~\ref{Lemma: bounded variation} to reach that conclusion. Thus, we see that for two of the most common smearing functions\footnote{For a more detailed discussion of when and how the smearing function regularizes the two-point distribution, see~\cite{Louko2006}.}, the vacuum of a real scalar field in $(1+3)$-dimensional Minkowski satisfies the condition for Eq.~\eqref{Eq: approx result component by component} to hold, \textit{at all orders} in perturbation theory. Since, for two or more spatial dimensions\footnote{Also for one spatial dimension, assuming that the infrared divergences are properly regularized.}, the possible non-smoothness of the field correlation functions can only arise from the ultraviolet behaviour of the field state, we can argue that any Hadamard state, both in flat and curved spacetime (barring pathological geometries), will essentially have the same ultraviolet behaviour as the Minkowski vacuum, and therefore should in principle satisfy the assumptions of the approximation result as well.

Finally, we can turn to the multiple detector scenario. The proof proceeds in just the same way as for the single detector case. For the sake of brevity and simplicity we will not give its details here, but we argue that the procedure is completely analogous. Specifically, the similar structure of Eqs.~\eqref{Eq: each unitary single detector} and~\eqref{Eq: each unitary multiple detectors} is revealing of how close both scenarios are from the technical point of view. The case of multiple detectors will in general involve products of different switching functions and different detector operators, as well as field correlation functions associated to potentially different field operators, smeared over different spatial profiles. However, one can give homologous definitions of $\Xi$ (cf. Eq.~\eqref{Eq: Def Xi}) and $\hat\omega$ (cf. Eq.~\eqref{Eq: perturbative piece cont}) that lead to the multiple detectors version of Eq.~\eqref{Eq: perturbative piece cont rewritten}. The steps that follow are exactly those followed above for the one detector scenario, adapted to the new integration regions, which for the case of multiple detectors are general rectangles, instead of the hypercubes we dealt with in the case of a single detector.

\section{The $n$-dimensional Riemann-Stieltjes integral}\label{Appendix: The n-dimensional Riemann-Stieltjes integral}

Even though the Riemann-Stieltjes integral is a basic tool in real analysis, its generalization to the $n$-dimensional case is not as popular in classical texts. Because of that, in this appendix we present a very brief introduction to the $n$-dimensional Riemann-Stieltjes integral---or rather, its generalization to functions of complex argument---and we prove the three lemmas we referred to in Appendix~\ref{Appendix: Proof of the approximation result}. 

Let \mbox{$f,g :  \R{n} \to \C{}$}, and let \mbox{$R= [x_1,y_1] \times \hdots \times [x_n,y_n] \subset \R{n}$} be a closed rectangle in $\R{n}$. Following~\cite{Hildebrandt1963}, we define
\begin{equation}
\Delta^{(n)}_R g = \sum_{\bm z} (-1)^{\#x(\bm z)} g(\bm z),
\end{equation}
where the sum runs over all \mbox{$\bm z = (z_1,\hdots,z_n) \in \R{n}$} such that $z_i= x_i$ or $y_i$, for each \mbox{$i=1,\hdots,n$}, and $\#x(\bm z)$ is the number of components of $\bm z$ that are equal to $x_i$ and not $y_i$. This definition is motivated by the fact that for the function \mbox{$p(\bm z) = \prod_{i=1}^n z_i$}, we have 
\begin{equation}
\Delta^{(n)}_R p = \prod_{i=1}^{n} (y_i-x_i) = \operatorname{Vol}(R).
\end{equation}
Now, for each $i=1,\hdots,n$, consider a partition of the interval $[x_i,y_i]$, 
\begin{equation}
P_i=\{t_0^{(i)}=x_i,\hdots,t_{M_i}^{(i)}=y_i\},
\end{equation}
with a set of tags
\begin{equation}
T_i = \{u_1,\hdots,u_{M_i}\,:\,u_j^{(i)} \in [t_{j-1}^{(i)},t_j^{(i)}],\,\forall\,j=1,\hdots,M_{i}\}.
\end{equation}
All these partitions together induce an $n$-dimensional rectangular partition \mbox{$P \equiv P_1\times\hdots\times P_n$} of the rectangle $R$, formed by subrectangles 
\begin{equation}
R_{\bm j} = [t_{j_1-1}^{(1)},t_{j_1}^{(1)}] \times \hdots \times [t_{j_n-1}^{(n)},t_{j_n}^{(n)}] \subset R
\end{equation}
that can only overlap on their edges. Each subrectangle also has an associated tag \mbox{$\bm u_{\bm j}=(u_{j_1},\hdots,u_{j_n})$}, and we can define the norm of $P$ as the maximum among the norms of each $P_i$, i.e., 
\begin{equation}
\norm{P}\coloneqq  \max_{i=1,\hdots,n}\norm{P_i}=\max\{|t^{(i)}_{j}-t^{(i)}_{j-1}|\,:\,j=1,\hdots,M_i,\,i=1,\hdots,n\}.
\end{equation}
We define the Riemann-Stieltjes sum associated with $f$, $g$, and the $n$-dimensional tagged partition $P$ of the rectangle $R$ as
\begin{equation}\label{Eq: RS sum}
S_R(f,g;P) = \sum_{\bm j} f(\bm u_{\bm j})\, \Delta^{(n)}_{R_{\bm j}} g.
\end{equation}
Hence, we say that $f$ is Riemann-Stieltjes integrable with respect to $g$ in $R$ when there exists $I\in\C{}$ such that, for every $\epsilon > 0$, there is some $\delta > 0$ for which 
\begin{equation}
|S_R(f,g;P)-I| < \epsilon
\end{equation}
holds whenever $\norm{P} < \delta$. In that case, we call $I$ the Riemann-Stieltjes integral of $f$ with respect to $g$ over $R$,
\begin{equation}
I \equiv \int_R \diff g \, f.
\end{equation}

\begin{comment}
We define the Riemann-Stieltjes sum, upper sum, and lower sum associated with $f$, $g$, and the $n$-dimensional tagged partition $P$ of the rectangle $R$ as
\begin{align}
R(f,g;P) & = \sum_{\bm j} f(\bm u_{\bm j})\, \Delta^{(n)}_{R_{\bm j}} g,  \\
U(f,g;P) & = \sum_{\bm j} \sup_{\bm u\in R_{\bm j}}f(\bm u) \,\Delta^{(n)}_{R_{\bm j}} g, \\
L(f,g;P) & = \sum_{\bm j} \inf_{\bm u\in R_{\bm j}}f(\bm u)\, \Delta^{(n)}_{R_{\bm j}} g.
\end{align}
We define the $n$-dimensional Riemann-Stieltjes integral of $f$ with respect to $g$ over $R$ as
\begin{equation}
\int_R \diff g\, f  \coloneqq \sup_{P} L(f,g;P) = \inf_P U(f,g;P)
\end{equation}
whenever the last equality holds, where the supremum and the infimum are taken over all possible rectangular partitions of $R$.
\end{comment}

\begin{lemma}\label{Lemma: somewhat integration by parts}
Let $f,g:\R{n} \to \C{}$, and let $R= [x_1,y_1] \times \hdots \times [x_n,y_n]$ be a rectangle in $\R{n}$, such that $f$ is integrable over $R$ with respect to $g$, and either $f=0$ or $g=0$ on the boundary $\partial R$. Then, $g$ is also integrable over $R$ with respect to $f$, and
\begin{equation}
\int_R \diff f \, g = (-1)^{n} \int_R \diff g\, f.
\end{equation}
\end{lemma}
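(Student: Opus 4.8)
The plan is to generalize the classical one-dimensional Riemann--Stieltjes integration-by-parts formula to $n$ dimensions, processing one coordinate at a time. The one-dimensional input is the following purely algebraic identity for a tagged partition $\pi$ of an interval $[x_i,y_i]$ with nodes $x_i=p_0\le\dots\le p_M=y_i$ and tags $q_k\in[p_{k-1},p_k]$: for any $a,b:[x_i,y_i]\to\C{}$,
\begin{equation*}
\sum_{k=1}^{M} a(q_k)\big(b(p_k)-b(p_{k-1})\big)=a(y_i)b(y_i)-a(x_i)b(x_i)-\sum_{k=1}^{M}\Big[b(p_k)\big(a(p_k)-a(q_k)\big)+b(p_{k-1})\big(a(q_k)-a(p_{k-1})\big)\Big],
\end{equation*}
which follows by telescoping $\sum_k\big(a(p_k)b(p_k)-a(p_{k-1})b(p_{k-1})\big)$. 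The crucial observation is that the last sum is \emph{exactly} the Riemann--Stieltjes sum $S(b,a;\pi^{\sharp})$, with $b$ the integrand and $a$ the integrator, over the refined tagged partition $\pi^{\sharp}$ obtained by inserting each tag $q_k$ as a new node, the old nodes then serving as the new tags; in particular $\norm{\pi^{\sharp}}\le\norm{\pi}$ since inserting a tag from $[p_{k-1},p_k]$ splits that subinterval into two pieces of length at most $\norm{\pi}$.

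First I would dispatch the case $n=1$: taking $(a,b)=(g,f)$, the boundary term $g(y_1)f(y_1)-g(x_1)f(x_1)$ vanishes whenever $f\equiv0$ or $g\equiv0$ on $\partial R=\{x_1,y_1\}$, so $S_R(g,f;\pi)=-\,S_R(f,g;\pi^{\sharp})$; letting $\norm{\pi}\to0$ (hence $\norm{\pi^{\sharp}}\to0$) and using that $f$ is integrable with respect to $g$ shows $g$ is integrable with respect to $f$ and $\int_R\diff f\,g=-\int_R\diff g\,f$.

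For general $n$ I would exploit that $\Delta^{(n)}_{R_{\bm j}}$ factorizes as the composition of one-dimensional finite-difference operators, one per coordinate. Writing $S_R(g,f;P)=\sum_{\bm j}g(\bm u_{\bm j})\,\Delta^{(n)}_{R_{\bm j}}f$ and applying the displayed identity in the first coordinate---with the differences still carried on $f$ in coordinates $2,\dots,n$ treated as spectator parameters---replaces the sum over $j_1$ by a boundary contribution plus $(-1)$ times a sum over a partition refined in the first coordinate, now with $f$ the integrand and $g$ the integrator in that coordinate. Iterating through coordinates $2,\dots,n$ transfers, one by one, all finite differences from $f$ onto $g$, leaves the product-refined partition $P^{\sharp}$ (with $\norm{P^{\sharp}}\le\norm{P}$), and accumulates the factor $(-1)^n$ together with $n$ batches of boundary terms. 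A short check shows that the boundary terms produced at the $i$-th stage are products of a $g$-value (or a two-term alternating combination of $g$-values) with a finite alternating combination of $f$-values, in which \emph{either} the $g$-factor \emph{or} every $f$-value has its $i$-th argument frozen to $x_i$ or $y_i$, i.e.\ lies on a face of $R$; hence all of them vanish under the single hypothesis that $f\equiv0$ or $g\equiv0$ on $\partial R$. What remains is the clean identity $S_R(g,f;P)=(-1)^n S_R(f,g;P^{\sharp})$, and taking $\norm{P}\to0$ while invoking the integrability of $f$ with respect to $g$ gives that $g$ is integrable with respect to $f$ and $\int_R\diff f\,g=(-1)^n\int_R\diff g\,f$.

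The main obstacle I anticipate is bookkeeping rather than conceptual: organizing the coordinate-by-coordinate induction so that the spectator differences on $f$ and the already-transferred differences on $g$ are tracked correctly, and verifying that \emph{every} boundary term generated at each of the $n$ stages has at least one argument pinned to a face of $R$, so that the hypothesis ``$f\equiv0$ or $g\equiv0$ on $\partial R$'' annihilates all of them at once. One should also confirm that $P^{\sharp}$ is a genuine product partition of $R$ with norm no larger than $\norm{P}$ (immediate from the splitting remark above), which is exactly what lets the limit $\norm{P}\to0$ pass through to $P^{\sharp}$.
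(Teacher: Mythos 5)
Your proposal is correct and follows essentially the same route as the paper's proof: both construct the ``dual'' tagged partition in which the original tags become nodes and the original nodes become tags, derive the exact algebraic identity $S_R(f,g;P^{\sharp})=(-1)^nS_R(g,f;P)+\text{boundary terms}$, observe that every boundary term is pinned to a face of $R$ and hence vanishes under the hypothesis, and then pass to the limit $\norm{P}\to0$ using the integrability of $f$ with respect to $g$. Your coordinate-by-coordinate induction simply makes explicit the ``simple calculation'' the paper asserts, and your slightly finer refinement $P^{\sharp}$ (retaining the old interior nodes) even avoids the paper's factor of $\delta/2$, since $\norm{P^{\sharp}}\le\norm{P}$ rather than $\le2\norm{P}$.
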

\begin{proof}
Given $\epsilon>0$, since $f$ is integrable over $R$ with respect to $g$, there exists $\delta>0$ such that if \mbox{$P=P_1\times \hdots\times P_n$} is an $n$-dimensional tagged partition of $R$ with $\norm{P}<\delta$, then
\begin{equation}\label{Eq: integrability of f}
\bigg| S_R(f,g;P) - \int_R \diff g\, f \bigg| < \epsilon.
\end{equation}
Consider now a rectangular partition of $R$, \mbox{$P=P_1\times\hdots\times P_n$}, such that $\norm{P}<\delta/2$. For each one-dimensional partition \mbox{$P_i=\{t_0^{(i)}=x_i,\hdots,t_{M_i}^{(i)}=y_i\}$} of the interval $[x_i,y_i]$, with tags \mbox{$\{u_1,\hdots,u_{M_i}\}$}, consider another partition
\begin{equation}
\bar{P}_i \coloneqq \{t_0^{(i)}=x_i,u_1,\hdots,u_{M_{i}},t_{M_i}=y_i\},
\end{equation}
with set of tags $\bar{T}_i=P_i$, i.e., 
\begin{equation}
\bar{T}_i = \{v_1,\hdots,v_{M_i},v_{M_{i}+1}\,:\,v_1=t_0,\,v_j=t_{j-1},\forall\,j>1\}.
\end{equation}
Notice in particular that the set of tags of $\bar{P}_i$ includes the limits of the interval, $x_i$ and $y_i$. We can then define a new partition $\bar{P}=\bar{P}_1\times\hdots\times\bar{P}_n$, with set of tags $\bar{T}=\bar{T}_1\times\hdots\times\bar{T}_n$, which by construction satisfies $\norm{P}<\delta$. Moreover, a simple calculation shows that
\begin{equation}\label{Eq: equality with boundary terms}
S(f,g;\bar{P}) = (-1)^{n} S(g,f; P) + \text{`Boundary terms'}, 
\end{equation}
where the boundary terms only include sums of products of $f$ and $g$ at points of the boundary $\partial R$. Since either $f=0$ or $g=0$ on the boundary, the associated terms cancel out. Eqs.~\eqref{Eq: integrability of f} and~\eqref{Eq: equality with boundary terms} thus allow us to conclude that
\begin{equation}
\bigg| (-1)^n S_R(g,f;P) - \int_R \diff g\, f \bigg| < \epsilon,
\end{equation}
and therefore that $g$ is integrable over $R$ with respect to $f$, with
\begin{equation}
\int_R \diff f\, g = (-1)^n \int_R \diff g\, f,
\end{equation}
as claimed.
\end{proof}

\begin{lemma}\label{Lemma: bound for the integral}
Let $f,g:\R{n}\to\C{}$, and let $R= [x_1,y_1] \times \hdots \times [x_n,y_n]$ be a rectangle in $\R{n}$. If $f$ is bounded in $R$ and $g$ is of bounded variation in $R$, then we have that
\begin{equation}
\bigg|\int_R \diff g\, f\bigg| \leq \norm{f}_\infty \operatorname{Var}_{R}\,(g),
\end{equation}
where 
\begin{equation}
\norm{f}_\infty = \sup_{R}|f|
\end{equation}
is the optimal bound for $f$ in $R$, and
\begin{equation}
\operatorname{Var}_R\,(g) \coloneqq \sup_{P} \sum_{\bm j} |\Delta_{R_{\bm j}}^{(n)}g|
\end{equation}
is the variation of $g$ in $R$, with the supremum taken over all rectangular partitions $P$ of the rectangle $R$, and the notation used as in Eq.~\eqref{Eq: RS sum}.
\end{lemma}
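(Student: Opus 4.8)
The plan is to obtain the estimate first at the level of Riemann--Stieltjes sums, uniformly over \emph{all} tagged partitions, and then pass to the limit using the assumed integrability. So first I would fix an arbitrary $n$-dimensional tagged partition $P=P_1\times\hdots\times P_n$ of $R$, with subrectangles $R_{\bm j}$ and associated tags $\bm u_{\bm j}\in R_{\bm j}\subset R$, and bound the corresponding sum~\eqref{Eq: RS sum} by the triangle inequality:
\[
\abs{S_R(f,g;P)} = \Bigl|\sum_{\bm j} f(\bm u_{\bm j})\,\Delta^{(n)}_{R_{\bm j}}g\Bigr| \le \sum_{\bm j}\abs{f(\bm u_{\bm j})}\,\bigl|\Delta^{(n)}_{R_{\bm j}}g\bigr| \le \norm{f}_\infty \sum_{\bm j}\bigl|\Delta^{(n)}_{R_{\bm j}}g\bigr| \le \norm{f}_\infty\operatorname{Var}_R(g),
\]
where the second inequality uses that every tag lies in $R$, and the last one is immediate from the definition of $\operatorname{Var}_R(g)$ as a supremum over rectangular partitions of $R$ (of which $P$ is one). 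The key point is that this chain of inequalities never refers to $\norm{P}$, so the bound holds for every tagged partition regardless of its fineness.

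Then I would invoke the hypothesis that $\int_R\diff g\,f$ exists: given $\epsilon>0$ there is $\delta>0$ such that $\abs{S_R(f,g;P)-\int_R\diff g\,f}<\epsilon$ whenever $\norm{P}<\delta$. Choosing one such partition $P$ and combining with the previous display gives
\[
\biggl|\int_R\diff g\,f\biggr| \le \abs{S_R(f,g;P)} + \epsilon \le \norm{f}_\infty\operatorname{Var}_R(g) + \epsilon ,
\]
and letting $\epsilon\to0$ yields the claimed bound.

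I do not expect a genuine obstacle here: boundedness of $f$ guarantees $\norm{f}_\infty<\infty$ and bounded variation of $g$ guarantees $\operatorname{Var}_R(g)<\infty$, so the right-hand side is finite, and the only point requiring a little care is that the sum-level estimate be valid for partitions of \emph{arbitrary} norm---which it is, since the argument only used $\bm u_{\bm j}\in R$ together with the fact that $\{R_{\bm j}\}$ is a rectangular partition of $R$. (If one wished to avoid assuming integrability outright, one could instead run a Cauchy-type argument, using that $\sum_{\bm j}|\Delta^{(n)}_{R_{\bm j}}g|$ is uniformly controlled, to first establish existence of $\int_R\diff g\,f$; but since the lemma already hypothesizes that $f$ is integrable with respect to $g$, the limiting argument above is all that is needed.)
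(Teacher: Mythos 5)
Your proof is correct and is essentially identical to the paper's: both bound an arbitrary Riemann--Stieltjes sum by $\norm{f}_\infty\operatorname{Var}_R(g)$ via the triangle inequality and then use the definition of integrability to transfer the bound to the integral up to an arbitrary $\epsilon>0$. Your parenthetical observation that the lemma implicitly presupposes existence of $\int_R\diff g\,f$ matches the paper, whose proof likewise opens by invoking integrability of $f$ with respect to $g$.
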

\begin{proof}
Since $f$ is integrable over $R$ with respect to $g$, for every $\epsilon>0$ there exists $\delta>0$ such that
\begin{equation}\label{Eq: epsilon inequality}
\bigg|S_R(f,g;P) - \int_R\diff g\, f\bigg| < \epsilon
\end{equation}
whenever $\norm{P}<\delta$. Now,
\begin{equation}
|S_R(f,g;P)| = \Big|\sum_{\bm j} f(\bm u_{\bm j}) \,\Delta_{R_{\bm j}}^{(n)}g \Big| \leq \sum_{\bm j} |f(\bm u_{\bm j})| \,|\Delta_{R_{\bm j}}^{(n)}g| \leq \norm{f}_{\infty} \sum_{\bm j}|\Delta_{R_{\bm j}}^{(n)}g| \leq \norm{f}_{\infty} \operatorname{Var}_{R}(g), 
\end{equation}
and
\begin{equation}
\bigg|S_R(f,g;P) - \int_R\diff g\, f\bigg| \geq \bigg| \int_R \diff g \, f \bigg| - \big| S_R(f,g;P) \big|,
\end{equation}
so we conclude from Eq.~\eqref{Eq: epsilon inequality} that
\begin{equation}\label{Eq: bound with epsilon}
\bigg| \int_R \diff g \, f \bigg| \leq \norm{f}_{\infty} \operatorname{Var}_{R}\,(g) + \epsilon.
\end{equation}
Since Eq.~\eqref{Eq: bound with epsilon} holds for all $\epsilon>0$, this proves the claim.
\end{proof}

\begin{lemma}\label{Lemma: bounded variation}
Let $g:\R{n}\to\C{}$, and let $R= [x_1,y_1] \times \hdots \times [x_n,y_n]$ be a rectangle in $\R{n}$. If $g$ is of class $\mathcal{C}^{n}(R)$, then $g$ is of bounded variation in $R$, and
\begin{equation}
\operatorname{Var}_R\,(g) = \int_R \diff \bm z \, \bigg|\frac{\partial^n g}{\partial z_1\hdots\partial z_n}\bigg|.
\end{equation}
\end{lemma}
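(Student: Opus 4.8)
The plan is to reduce the statement to a multidimensional fundamental theorem of calculus for the finite-difference operator $\Delta^{(n)}$, and then to sandwich the resulting difference sums between Riemann sums of $|Dg|$; throughout, write $Dg$ for the mixed derivative $\partial^n g/\partial z_1\cdots\partial z_n$. The key first step is to show that for every closed subrectangle $R'=[a_1,b_1]\times\cdots\times[a_n,b_n]\subseteq R$,
\begin{equation}
\Delta^{(n)}_{R'}g = \int_{R'}Dg\,\diff\bm z.
\end{equation}
This follows by induction on $n$: holding $z_2,\dots,z_n$ fixed, the one-dimensional fundamental theorem of calculus rewrites the difference of $g$ between $z_1=a_1$ and $z_1=b_1$ as $\int_{a_1}^{b_1}(\partial g/\partial z_1)\,\diff z_1$; applying the inductive hypothesis in the remaining $n-1$ variables to $\partial g/\partial z_1$, and using that $g\in\mathcal C^n(R)$ so that all mixed partials are continuous and the iterated integrals may be freely reordered by Fubini, collapses the full alternating sum $\Delta^{(n)}_{R'}g$ into the stated integral. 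The only bookkeeping is to verify that the signs $(-1)^{\#x(\bm z)}$ in the definition of $\Delta^{(n)}$ are precisely those generated by evaluating each one-dimensional antiderivative at its upper versus its lower limit.

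With this identity in hand the upper bound is immediate: for any rectangular partition $P$ of $R$ into subrectangles $R_{\bm j}$,
\begin{equation}
\sum_{\bm j}\big|\Delta^{(n)}_{R_{\bm j}}g\big| = \sum_{\bm j}\bigg|\int_{R_{\bm j}}Dg\,\diff\bm z\bigg| \leq \sum_{\bm j}\int_{R_{\bm j}}|Dg|\,\diff\bm z = \int_R|Dg|\,\diff\bm z,
\end{equation}
since the $R_{\bm j}$ cover $R$ and overlap only on sets of measure zero; continuity of $Dg$ on the compact set $R$ makes the last integral finite, so $g$ is of bounded variation with $\operatorname{Var}_R(g)\le\int_R|Dg|\,\diff\bm z$. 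For the matching lower bound I would invoke uniform continuity: letting $\omega$ be a modulus of continuity of $Dg$ on $R$ and choosing any tag $\bm\xi_{\bm j}\in R_{\bm j}$, both $\big|\int_{R_{\bm j}}Dg\,\diff\bm z\big|$ and $\int_{R_{\bm j}}|Dg|\,\diff\bm z$ differ from $|Dg(\bm\xi_{\bm j})|\operatorname{Vol}(R_{\bm j})$ by at most $\omega(\norm{P})\operatorname{Vol}(R_{\bm j})$, whence
\begin{equation}
\bigg|\sum_{\bm j}\big|\Delta^{(n)}_{R_{\bm j}}g\big| - \int_R|Dg|\,\diff\bm z\bigg| \leq 2\,\omega(\norm{P})\operatorname{Vol}(R),
\end{equation}
which tends to $0$ as $\norm{P}\to0$. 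Hence the supremum defining $\operatorname{Var}_R(g)$ is attained in the limit and equals $\int_R|Dg|\,\diff\bm z$, completing the proof.

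The step I expect to be most delicate is the first one. The analysis there is harmless---the $\mathcal C^n$ hypothesis together with compactness of $R$ legitimizes every interchange of differentiation and integration---but the combinatorics requires care: one must organize the induction so that the $2^n$ alternating signs of $\Delta^{(n)}$ are reproduced correctly at each stage, and check that this sign bookkeeping stays consistent with the chosen ordering of the variables $z_1,\dots,z_n$. Everything downstream is the routine squeeze of the difference sums between Riemann sums of the continuous function $|Dg|$ described above.
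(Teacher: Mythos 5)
Your proof is correct, and it differs from the paper's in one substantive way: the treatment of the lower bound $\int_R|Dg|\le\operatorname{Var}_R(g)$. The paper iterates the one\-/dimensional mean value theorem to produce, for each subrectangle $K$, a point $\bm c$ in its interior with $\Delta^{(n)}_K g=\operatorname{Vol}(K)\,\partial^n g/\partial z_1\cdots\partial z_n(\bm c)$, so that the difference sum $\sum_{\bm j}|\Delta^{(n)}_{R_{\bm j}}g|$ is \emph{exactly} a tagged Riemann sum of $|Dg|$, and then appeals to Riemann integrability. You instead prove the iterated fundamental\-/theorem identity $\Delta^{(n)}_{R'}g=\int_{R'}Dg\,\diff\bm z$ and squeeze the difference sums against Riemann sums using uniform continuity of $Dg$ on the compact $R$; your upper bound via the triangle inequality is the same as the paper's. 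Your route is slightly longer (you must set up the inductive sign bookkeeping for the FTC identity, and note that $\operatorname{Var}_R$ is a supremum over all partitions rather than a mesh limit, so you need the upper bound for \emph{every} partition together with the limit statement --- which you do have), but it is arguably more robust: the equality form of the mean value theorem used in the paper is, strictly speaking, valid only for real\-/valued functions, whereas $g$ here is complex\-/valued, so the paper's argument implicitly requires splitting into real and imaginary parts (or falling back on exactly the integral representation you prove); moreover the paper itself invokes the identity $\Delta^{(n)}_{R_{\bm j}}g=\int_{R_{\bm j}}Dg$ without proof in its converse inequality, a step your induction supplies explicitly. The only cosmetic caveat in your writeup is that the modulus of continuity should be evaluated at the diameter of a subrectangle (at most $\sqrt{n}\,\norm{P}$ in the Euclidean norm, or $\norm{P}$ in the sup norm), which affects nothing.
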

\begin{proof}
Consider a rectangle $K= [v_1,w_1] \times \hdots \times [v_n,w_n] \subset R$. Singling out the first coordinate, we can write
\begin{equation}
\Delta^{(n)}_{K}(g) = \sum_{\bm z} (-1)^{\# v(\bm z)} \big[ g(w_1, \bm z) - g(v_1,\bm z) \big] = \sum_{\bm z}(-1)^{\# v(\bm z)} g(w_1, \bm z) - \sum_{\bm z}(-1)^{\# v(\bm z)} g(v_1,\bm z),
\end{equation}
where $\bm z = (z_2,\hdots,z_{n}) \in \R{n-1}$, and $z_i=v_i$ or $w_i$ for each $i=2,\hdots,n$, with $\#v(\bm z)$ being as before the number of components $z_i$ of $\bm z$ that are equal to $v_i$. If we define
\begin{equation}
g_1(w) \coloneqq \sum_{\bm z}(-1)^{\# v(\bm z)} g(w, \bm z),
\end{equation}
then $g_1$ is continuously differentiable, and by the mean value theorem there exists $c_1 \in (v_1, w_1)$ such that
\begin{equation}\label{Eq: MVT first step}
\Delta^{(n)}_{K}(g) = g'_1(c_1) (w_1-v_1) = (w_1-v_1) \sum_{\bm z}(-1)^{\# v(\bm z)} \frac{\partial g}{\partial z_1}(c_1,\bm z).
\end{equation}
If $n=1$, we stop here. Otherwise, we can define \mbox{$K_1\coloneqq [v_2,w_2] \times \hdots \times [v_n,w_n]$}, and realize from Eq.~\eqref{Eq: MVT first step} that
\begin{equation}
\Delta^{(n)}_K(g) = (w_1-v_1) \,\Delta^{(n-1)}_{K_1}\bigg( \frac{\partial g}{\partial z_1}(c_1,\cdot) \bigg).
\end{equation}
Since $g$ is of class $\mathcal{C}^{n}(R)$, the partial derivative is of class $\mathcal{C}^{n-1}(R)$, and in particular so is its restriction to $K_1$. We can then repeat the previous argument with the partial derivative instead of $g$, concluding after this second step that there exists $c_2 \in (v_2,w_2)$ such that
\begin{equation}
\Delta^{(n)}_{K}(g) = (w_1-v_1)(w_2-v_2)\sum_{\bm z}(-1)^{\# v(\bm z)} \frac{\partial^2 g}{\partial z_1\partial z_2}(c_1,c_2,\bm z),
\end{equation}
where now $\bm z \in \R{n-2}$ runs only over the last $n-2$ components (if any). In general, after $n$ iterations we find that there exists $\bm c$ in the interior of $K$ such that
\begin{equation}\label{Eq: Generalized MVT}
\Delta^{(n)}_K(g) = \bigg[ \prod_{i=1}^{n} (w_i-v_i) \bigg] \frac{\partial^n g}{\partial z_1\hdots\partial z_n}(\bm c) =\operatorname{Vol}(K) \frac{\partial^n g}{\partial z_1\hdots\partial z_n}(\bm c).  
\end{equation}
\begin{comment}
Since $g$ is of class $\mathcal{C}^n(R)$, its $n$-th partial derivative is continuous, and therefore it is bounded in the compact set $R$---because $R$ is a compact set. We conclude that, for any rectangle $K \subset R$, 
\begin{equation}\label{Eq: bound for the generalized volume}
|\Delta^{(n)}_K(g)| \leq M \operatorname{Vol}(K),
\end{equation}
where $M$ is a bound for the $n$-partial derivative \mbox{$\partial^n g / \partial z_1\hdots\partial z_n$} in $R$. 

Now, by definition, the variation of $g$ in $R$ is given by
\begin{equation}
\operatorname{Var}_R(g) = \sup_P \sum_{\bm j} |\Delta^{(n)}_{R_{\bm j}}g|,
\end{equation}
where the supremum is taken over all possible rectangular partitions of $R$. By Eq.~\eqref{Eq: bound for the generalized volume}, for any partition $P$,
\begin{equation}
\sum_{\bm j} |\Delta^{(n)}_{R_{\bm j}}g| \leq M \sum_{\bm j} \operatorname{Vol}(R_{\bm j}) = M \operatorname{Vol}(R),
\end{equation}
hence
\begin{equation}
\operatorname{Var}_R(g) \leq M \operatorname{Vol}(R),
\end{equation}
and in particular it is bounded\footnote{It is straightforward to take the proof further to show that the variation is the Riemann integral over $R$ of the $n$-th partial derivative \mbox{$\partial^n g / \partial z_1\hdots\partial z_n$}, but we will not need this result here.}.
\end{comment}
Now, the variation of $g$ in $R$ is given by
\begin{equation}
\operatorname{Var}_R(g) = \sup_P \sum_{\bm j} |\Delta^{(n)}_{R_{\bm j}}g|,
\end{equation}
where the supremum is taken over all possible rectangular partitions of $R$. By Eq.~\eqref{Eq: Generalized MVT}, given an arbitrary partition $P$ of $R$, 
\begin{equation}
\sum_{\bm j} |\Delta^{(n)}_{R_{\bm j}}g| = \sum_{\bm j} \bigg| \frac{\partial^n g}{\partial z_1\hdots\partial z_n}(\bm c_{\bm j}) \bigg| \operatorname{Vol}(R_{\bm j}),
\end{equation}
where each $c_{\bm j}$ belongs to the interior of $R_{\bm j}$. This corresponds to the Riemann sum of \mbox{$\partial^n g / \partial z_1\hdots\partial z_n$} associated with the partition $P$ with tags $c_{\bm j}$. Since \mbox{$\partial^n g / \partial z_1\hdots\partial z_n$} is continuous, in particular it is Riemann integrable, and therefore given any $\epsilon > 0$, there exists $\delta>0$ such that if $\norm{P} < \delta$, then 
\begin{equation}
\bigg| \int_{R} \diff\bm z\, \bigg| \frac{\partial^n g}{\partial z_1\hdots\partial z_n}(\bm z) \bigg| - \sum_{\bm j} \bigg| \frac{\partial^n g}{\partial z_1\hdots\partial z_n}(\bm c_{\bm j}) \bigg| \operatorname{Vol}(R_{\bm j}) \bigg| < \epsilon. 
\end{equation}
In particular,
\begin{equation}
\int_{R} \diff\bm z\, \bigg| \frac{\partial^n g}{\partial z_1\hdots\partial z_n}(\bm z) \bigg| - \epsilon < \sum_{\bm j} \bigg| \frac{\partial^n g}{\partial z_1\hdots\partial z_n}(\bm c_{\bm j}) \bigg| \operatorname{Vol}(R_{\bm j}) \leq \operatorname{Var}_R(g).
\end{equation}
This is true for all $\epsilon > 0$, and thus
\begin{equation}\label{Eq: inequality 1}
\int_{R} \diff\bm z\, \bigg| \frac{\partial^n g}{\partial z_1\hdots\partial z_n}(\bm z) \bigg| \leq \operatorname{Var}_R(g).
\end{equation}
Conversely, for any partition $P$,
\begin{equation}
\sum_{\bm j} |\Delta^{(n)}_{R_{\bm j}}g| = \sum_{\bm j} \bigg| \int_{R_{\bm j}}\diff\bm z\, \frac{\partial^n g}{\partial z_1\hdots\partial z_n}(\bm z)   \bigg| \leq  \sum_{\bm j} \int_{R_{\bm j}} \diff\bm z\,\bigg|\frac{\partial^n g}{\partial z_1\hdots\partial z_n}(\bm z)   \bigg| = \int_{R}\diff\bm z\, \bigg|\frac{\partial^n g}{\partial z_1\hdots\partial z_n}(\bm z)   \bigg|.
\end{equation}
The supremum taken over all possible partitions still verifies the inequality, i.e, 
\begin{equation}\label{Eq: inequality 2}
\operatorname{Var}_R\,(g) \leq \int_{R}\diff\bm z\, \bigg|\frac{\partial^n g}{\partial z_1\hdots\partial z_n}(\bm z)   \bigg|.
\end{equation}
Combining Eqs.~\eqref{Eq: inequality 1} and~\eqref{Eq: inequality 2} gives us the desired result.
\end{proof}

\twocolumngrid	
\bibliography{references}
	
\end{document}